\newtheorem{thm}{Theorem}[section]
\newtheorem{cor}[thm]{Corollary}
\newtheorem{lem}[thm]{Lemma}
\crefname{thm}{Theorem}{Theorems}
\crefname{lem}{Lemma}{Lemmas}
\crefname{cor}{Corollary}{Corollaries}
\crefname{prop}{Proposition}{Propositions}
\crefname{defi}{Definition}{Definitions}
\crefname{remark}{Remark}{Remarks}
\crefname{example}{Example}{Examples}
\crefname{assumption}{Assumption}{Assumptions}
\crefname{figure}{Fig{.}}{Figs{.}}%{Figures}
\crefname{table}{Table}{Tables}
\crefname{problem}{Problem}{Problems}
\crefname{fact}{Fact}{Facts}
\crefname{conjecture}{Conjecture}{Conjectures}
\crefname{condition}{Condition}{Conditions}
\crefname{requirement}{Requirement}{Requirements}
\newcommand*{\emptygraph}{\emptyset}
\newcommand*{\defeq}{\stackrel{\mathrm{def}}{=}}
\newcommand*{\lfalse}{\mathbf{false}}
\newcommand*{\ltrue}{\mathbf{true}}
\newcommand*{\sfalse}{\textit{false}}
\newcommand*{\strue}{\textit{true}}
\newcommand*{\db}[1]{\left\llbracket #1 \right\rrbracket}
\newcommand*{\Expr}{\mathsf{Expr}}
\newcommand*{\Assn}{\mathsf{Assn}}
\newcommand*{\FinMea}{\mathrm{FinMea}}
\newcommand*{\support}{\mathrm{supp}}
\tikzset{
    inline/.style={scale=0.15},
    vertex/.style={circle, fill=black, inner sep=0pt, minimum size=3pt}
}
\newcommand{\Ithree}{%
  \begin{tikzpicture}[inline]
    \coordinate (a) at (90:1);
    \coordinate (b) at (210:1);
    \coordinate (c) at (330:1);
    \node[vertex] at (a) {};
    \node[vertex] at (b) {};
    \node[vertex] at (c) {};
  \end{tikzpicture}%
}
\newcommand{\Ethree}{%
  \begin{tikzpicture}[inline]
    \coordinate (a) at (90:1);
    \coordinate (b) at (210:1);
    \coordinate (c) at (330:1);
    \draw (b) -- (c); % The single edge
    \node[vertex] at (a) {};
    \node[vertex] at (b) {};
    \node[vertex] at (c) {};
  \end{tikzpicture}%
}
\newcommand{\Pthree}{%
  \begin{tikzpicture}[inline]
    \coordinate (a) at (90:1);
    \coordinate (b) at (210:1);
    \coordinate (c) at (330:1);
    \draw (b) -- (a) -- (c); % The two edges connecting at the top
    \node[vertex] at (a) {};
    \node[vertex] at (b) {};
    \node[vertex] at (c) {};
  \end{tikzpicture}%
}
\newcommand{\Kthree}{%
  \begin{tikzpicture}[inline]
    \coordinate (a) at (90:1);
    \coordinate (b) at (210:1);
    \coordinate (c) at (330:1);
    \draw (b) -- (a) -- (c) -- (b); % The two edges connecting at the top
    \node[vertex] at (a) {};
    \node[vertex] at (b) {};
    \node[vertex] at (c) {};
  \end{tikzpicture}%
}
\newcommand{\LItwo}{%
  \begin{tikzpicture}[inline]
    \coordinate (a) at (90:1);
    \coordinate (b) at (330:1);
    \node[vertex,fill=white,circle,draw] at (a) {};
    \node[vertex] at (b) {};
  \end{tikzpicture}%
}
\newcommand{\LEtwo}{%
  \begin{tikzpicture}[inline]
    \coordinate (a) at (90:1);
    \coordinate (b) at (330:1);
    \draw (a) -- (b); % The single edge
    \node[vertex,fill=white,circle,draw] at (a) {};
    \node[vertex] at (b) {};
  \end{tikzpicture}%
}
\newcommand{\LIthree}{%
  \begin{tikzpicture}[inline]
    \coordinate (a) at (90:1);
    \coordinate (b) at (210:1);
    \coordinate (c) at (330:1);
    \node[vertex,fill=white,circle,draw] at (a) {};
    \node[vertex] at (b) {};
    \node[vertex] at (c) {};
  \end{tikzpicture}%
}
\newcommand{\LEthreeC}{%
  \begin{tikzpicture}[inline]
    \coordinate (a) at (90:1);
    \coordinate (b) at (210:1);
    \coordinate (c) at (330:1);
    \draw (b) -- (c); % The single edge
    \node[vertex,fill=white,circle,draw] at (a) {};
    \node[vertex] at (b) {};
    \node[vertex] at (c) {};
  \end{tikzpicture}%
}
\newcommand{\LEthreeB}{%
  \begin{tikzpicture}[inline]
    \coordinate (a) at (90:1);
    \coordinate (b) at (210:1);
    \coordinate (c) at (330:1);
    \draw (a) -- (c); % The single edge
    \node[vertex,fill=white,circle,draw] at (a) {};
    \node[vertex] at (b) {};
    \node[vertex] at (c) {};
  \end{tikzpicture}%
}
\newcommand{\LPthreeC}{%
  \begin{tikzpicture}[inline]
    \coordinate (a) at (90:1);
    \coordinate (b) at (210:1);
    \coordinate (c) at (330:1);
    \draw (b) -- (a) -- (c); % The two edges connecting at the top
    \node[vertex,fill=white,circle,draw] at (a) {};
    \node[vertex] at (b) {};
    \node[vertex] at (c) {};
  \end{tikzpicture}%
}
\newcommand{\LPthreeB}{%
  \begin{tikzpicture}[inline]
    \coordinate (a) at (90:1);
    \coordinate (b) at (210:1);
    \coordinate (c) at (330:1);
    \draw (a) -- (c) -- (b); % The two edges connecting at the top
    \node[vertex,fill=white,circle,draw] at (a) {};
    \node[vertex] at (b) {};
    \node[vertex] at (c) {};
  \end{tikzpicture}%
}
\newcommand{\LKthree}{%
  \begin{tikzpicture}[inline]
    \coordinate (a) at (90:1);
    \coordinate (b) at (210:1);
    \coordinate (c) at (330:1);
    \draw (b) -- (a) -- (c) -- (b); 
    \node[vertex,fill=white,circle,draw] at (a) {};
    \node[vertex] at (b) {};
    \node[vertex] at (c) {};
  \end{tikzpicture}%
}
\title{An Introduction to Razborov's Flag Algebra as a Proof System for Extremal Graph Theory}
\date{}
\author[1]{Gyeongwon Jeong}
\affil[1]{School of Computing, KAIST, South Korea,
  \texttt{jgyw0910@kaist.ac.kr}}
\author[2]{Seonghun Park}
\affil[2]{School of Computing, KAIST, South Korea,
  \texttt{hun57@kaist.ac.kr}}
\author[3]{Hongseok Yang}
\affil[3]{School of Computing, KAIST, South Korea,
  \texttt{hongseok00@gmail.com}}
\begin{document}
\maketitle

\begin{abstract}
Razborov's flag algebra forms a powerful framework for deriving asymptotic inequalities between induced subgraph densities, underpinning many advances in extremal graph theory. This survey introduces flag algebra to computer scientists working in logic, programming languages, automated verification, and formal methods. We take a logical perspective on flag algebra and present it in terms of syntax, semantics, and proof strategies, in a style closer to formal logic. One popular proof strategy derives valid inequalities by first proving inequalities in a labelled variant of flag algebra and then transferring them to the original unlabelled setting using the so-called downward operator. We explain this strategy in detail and highlight that its transfer mechanism relies on the notion of what we call an adjoint pair, reminiscent of Galois connections and categorical adjunctions, which appear frequently in work on automated verification and programming languages. Along the way, we work through representative examples, including Mantel's theorem and Goodman's bound on Ramsey multiplicity, to illustrate how mathematical arguments can be carried out symbolically in the flag algebra framework.
\end{abstract}

\section{Introduction}

Razborov's flag algebra~\cite{Razborov2007} is a framework for reasoning about
\emph{asymptotic} (i.e., large-size) properties of combinatorial structures.
In extremal graph theory, many problems ask for the maximum or minimum possible density
of a fixed finite pattern (such as an edge, a triangle, or an induced cycle) among all large
graphs satisfying certain constraints (such as forbidding a subgraph).  Flag algebra turns
such questions into optimisation problems over a compact space of limit objects, and provides
a symbolic calculus for deriving valid inequalities between densities of those finite patterns.

Since its introduction, the flag algebra framework has led to solutions of, and substantial progress on,
several central problems in extremal graph theory, including the minimum triangle density
problem~\cite{Razborov2008}, Erd\H{o}s's pentagon problem~\cite{HatamiHKNR13,Grzesik12},
the induced $5$-cycle density problem~\cite{BaloghHLP16}, Tur\'an's tetrahedron
problem~\cite{Razborov10,BaberT12}, and the rainbow-triangle density
problem~\cite{BaloghHLPVY17}. Many key proof steps in the framework have also been automated:
{\sc Flagmatic}~\cite{Falgas-RavryV13,PikhurkoST19} is a widely used tool that underpins
several of these results in extremal graph theory.

This article surveys flag algebra for computer scientists working in logic, programming languages, automated verification, and formal
methods. Although several excellent surveys on flag algebra exist~\cite{Razborov13,SilvaFS16,Grzesik15thesis},
they are written primarily for mathematicians,
presenting flag algebra as a symbolic calculus for familiar arguments (e.g., double counting and
applications of Cauchy--Schwarz). They also emphasise the algebraic viewpoint of flag algebra. By contrast, we take a \emph{logical} perspective and present
flag algebra in terms of syntax, semantics, and proof strategies, in a style closer to formal
logic. We also highlight that some core constructions (notably the so-called downward operator that relates
labelled and unlabelled settings) can be understood via ideas reminiscent of Galois connections
and, more generally, categorical adjunctions---concepts that frequently appear in automated verification and programming languages. The results explained in this survey article, such as lemmas and their proofs, are from Razborov's original paper on flag algebra~\cite{Razborov2007}, although they are presented here in a slightly different style to emphasise the logical perspective and fit the overall narrative.

The rest of the paper is organised as follows. In the remainder of this introduction, we fix some notation and terminology used throughout the paper. In \Cref{sec:induced-subgraph-density}, we review the notion of induced subgraph density and use it to formulate representative asymptotic problems in extremal graph theory. In \Cref{sec:limiting-graphs}, we introduce limiting graphs, which enable a concise description of these problems and form the semantic basis of flag algebra. In \Cref{sec:syntax-semantics}, we present flag algebra as a logical system with two kinds of terms---density expressions and assertions---together with a semantics in terms of limiting graphs. We also illustrate, through two examples, how standard extremal arguments can be expressed as derivations in this system. The key steps in both examples rely on a nontrivial inequality. In \Cref{sec:downward-operator}, we describe a strategy for proving such inequalities using labelled variants of flag algebra together with a mechanism for transferring inequalities from the labelled setting to the unlabelled one via the so-called downward operator. Finally, in \Cref{sec:conclusion}, we discuss research opportunities in flag algebra for computer scientists working in logic, programming languages, automated verification, and formal methods.

Before we begin, we fix some notation and terminology.
For each $n \in \mathbb{N}$, we write $[n]$ for the set $\{1,2,\ldots,n\}$.
For a set $X$ and a nonnegative integer $n$, we write $\binom{X}{n}$ for the set of all
$n$-element subsets of $X$.
Throughout the paper, we consider finite, simple, undirected graphs $G$ whose vertex set is a
subset of $\mathbb{N}$. Here, \emph{simple} means that $G$ has neither loops (i.e., edges
from a vertex to itself) nor multiple edges (i.e., more than one edge between the same pair of
vertices). Thus, such a graph $G$ is formally a pair consisting of a finite set $V \subseteq \mathbb{N}$
and a set $E \subseteq \binom{V}{2}$. We refer to these simply as \emph{graphs}, and write
$\mathcal{G}$ for the set of all graphs. Note that $\mathcal{G}$ is countably infinite.
For a graph $G$, we write $V(G)$ and $E(G)$ for its vertex set and edge set, respectively, and
$v(G) \defeq |V(G)|$ and $e(G) \defeq |E(G)|$ for their cardinalities.
For $U \subseteq V(G)$, we write $G[U]$ for the subgraph of $G$ induced by $U$, i.e., the graph with
vertex set $U$ and edge set $E(G) \cap \binom{U}{2}$.
We use standard notation for common graphs. The complete graph on vertex set $[n]$ is denoted by $K_n$
(so $E(K_n) = \binom{[n]}{2}$),
the path on $n$ vertices by $P_n$
(i.e., $V(P_n)=[n]$ and $E(P_n)=\{\{i,i+1\} : i \in [n-1]\}$),
and the edgeless graph with $n$ vertices by $I_n$ (i.e., $V(I_n) = [n]$ and $E(I_n) = \emptyset$).  

\section{Induced Subgraph Density}
\label{sec:induced-subgraph-density}

Let $H$ and $G$ be graphs with $v(H) \leq v(G)$. The central notion in this survey article is the
\emph{induced subgraph density} of $H$ in $G$, or simply the \emph{$H$-density} in $G$, defined by
\[
p(H, G) \defeq \mathbb{P}\big[G[\mathbf{U}] \simeq H\big],
\]
where $\mathbf{U}$ is a uniformly random subset of $V(G)$ of size $v(H)$, and $\simeq$ denotes graph
isomorphism. In words, $p(H,G)$ is the probability that a uniformly random set of $v(H)$ vertices of
$G$ induces a subgraph isomorphic to $H$. For instance, if $H = K_2$ and $G = K_3$, then $p(H,G) = 1$,
since every induced subgraph of $G$ on two vertices is isomorphic to $H$. As another example, if
$H = K_2$ and $G = P_3$, then $p(H,G) = \frac{2}{3}$, since among the three induced subgraphs of $G$ on
two vertices, exactly two are isomorphic to $H$.
By convention, we set $p(H,G) = 0$ if $v(H) > v(G)$.

Many asymptotic extremal problems can be phrased in terms of induced subgraph densities.
A representative example is an asymptotic induced-subgraph version of a Tur\'an-type problem: given a graph $H$
and a finite set of forbidden graphs $\mathcal{F}$ such that there exists at least one graph avoiding every graph in $\mathcal{F}$, determine the maximum possible $H$-density in large
graphs that avoid every graph in $\mathcal{F}$. Formally, the problem asks for the value of
\[
T(H, \mathcal{F}) \defeq
\limsup_{n \to \infty} \max \{ p(H, G) \,:\, v(G) = n \text{ and } p(F,G) = 0 \text{ for all } F \in \mathcal{F} \}.
\]
When $H = K_2$ and $\mathcal{F} = \{K_3\}$, $T(H,\mathcal{F})$ is the asymptotic maximum edge density
of triangle-free graphs, and Mantel's theorem states that $T(H,\mathcal{F}) = 1/2$~\cite{Mantel1907}.
More generally, when $H = K_2$ and $\mathcal{F} = \{K_{r+1}\}$ for an integer $r \geq 2$, Tur\'an's
theorem gives $T(H,\mathcal{F}) = 1 - \frac{1}{r}$~\cite{Turan1941}.

Another example is the Ramsey multiplicity problem, a counting analogue of the classical Ramsey
number problem. Formally, for integers $s,t \geq 2$, the problem asks for the value of
\[
M(s,t) \defeq
\liminf_{n \to \infty} \min \{ p(K_s, G) + p(K_t, \overline{G}) \,:\, v(G) = n \},
\]
where $\overline{G}$ is the complement of $G$ (i.e., $V(\overline{G}) = V(G)$ and
$E(\overline{G}) = \binom{V(G)}{2} \setminus E(G)$). Intuitively, $G$ and $\overline{G}$ correspond
to a $2$-colouring of the edges of $K_n$ (say, blue and red), and $p(K_s, G) + p(K_t, \overline{G})$
denotes the combined density of blue copies of $K_s$ and red copies of $K_t$. It is known that
$M(3,3) = \frac{1}{4}$ by Goodman's result~\cite{Goodman1959}, but the problem remains open for many
other pairs $(s,t)$. The flag algebra framework has been used to determine exact values in some
cases (e.g., $M(3,4)$ and $M(3,5)$~\cite{Parczyk2025new}) and to obtain improved lower bounds in
others (e.g., $M(4,4)$, $M(4,5)$, and $M(5,5)$~\cite{Grzesik22tripartite,Parczyk2025new}).

In both examples, the two arguments of the density function $p$ play different roles: the second argument $G$ is the graph under study, whereas the first argument $H$ is part of the specification and denotes the pattern whose density is being measured. This convention is commonly used in extremal graph theory and is assumed in most applications of flag algebra. Accordingly, we call $G$ the \emph{host graph} and $H$ the \emph{pattern graph}. Typically, $G$ is large and $H$ is small, and $p(H,G)$ captures a local quantitative property of $G$.

Moreover, the density function $p$ induces a representation of a graph $G$ as an infinite-dimensional vector in $[0,1]^{\mathcal{G}}$ indexed by all graphs in $\mathcal{G}$. This representation is injective up to graph isomorphism: $p(\_,G)=p(\_,G')$ if and only if $G \simeq G'$. It also lets us use the product topology on $[0,1]^{\mathcal{G}}$ to study asymptotic properties of graph sequences, as we explain next.

\section{Limiting Graphs}
\label{sec:limiting-graphs}

Let $(G_n)_{n \in \mathbb{N}}$ be a sequence of graphs. The sequence is \emph{increasing} if
$v(G_n) < v(G_{n+1})$ for all $n \in \mathbb{N}$. It is \emph{convergent} if it is increasing and
$\lim_{n \to \infty} p(H, G_n)$ exists for every graph $H$. Equivalently, the sequence of vector
representations $(p(\_,G_n))_n$ mentioned at the end of the previous section converges pointwise in
$[0,1]^{\mathcal{G}}$.
Although this definition may appear restrictive---since it requires the limit to exist for every
graph $H$---convergence is in fact common: every increasing sequence of graphs has a convergent
subsequence. Indeed, the set $\mathcal{G}$ of graphs is countable\footnote{Recall that by a graph we mean a finite
simple graph $G$ with $V(G) \subseteq \mathbb{N}$.}, so the product space $[0,1]^{\mathcal{G}}$,
equipped with the product topology, is compact and metrizable. Hence it is sequentially compact,
and the sequence of vectors $(p(\_,G_n))_{n\in\mathbb{N}}$ admits a convergent subsequence.

A function $\phi : \mathcal{G} \to [0,1]$ is a \emph{limiting graph}\footnote{In~\cite{Razborov2007}, such a function is called a \emph{positive homomorphism}. We avoid this terminology because it can be opaque outside the algebraic setting, and we want to emphasise the interpretation of $\phi$ as a limit object representing an infinite graph.} if there exists a convergent sequence of graphs $(G_n)_{n \in \mathbb{N}}$ such that
\[
\phi(H) = \lim_{n \to \infty} p(H, G_n)
\qquad\text{for every graph } H \in \mathcal{G}.
\]
Let $\Phi$ denote the set of all limiting graphs. A useful mental picture (formalised, for example,
via graphons~\cite{Lovasz2012GraphonBook}) is that a limiting graph represents a random infinite undirected graph with a continuum number of vertices, and $\phi(H)$ gives the average induced subgraph density of the pattern $H$ in this infinite host graph.
Every convergent sequence $(G_n)_n$ determines a unique limiting graph $\phi$, but the same limiting
graph may arise from multiple sequences.

Using limiting graphs, we can restate the extremal problems from \Cref{sec:induced-subgraph-density} more cleanly.\footnote{The correctness of this restatement follows from Corollary 3.4 of \cite{Razborov2007} and the fact that the asymptotic induced-subgraph Tur\'an-type problem can be rephrased in terms of unconstrained optimisation.} The
asymptotic induced-subgraph Tur\'an-type problem can be formulated as the optimisation problem
\[
T(H, \mathcal{F}) \defeq \sup\ \{ \phi(H) \,:\, \phi \in \Phi \text{ and } \phi(F) = 0 \text{ for all } F \in \mathcal{F} \}.
\]
Similarly, since $p(K_t,\overline{G}) = p(I_t,G)$, the Ramsey multiplicity problem can be written as
\[
M(s,t) \defeq
\inf\ \{ \phi(K_s) + \phi(I_t) \,:\, \phi \in \Phi \}.
\]
Here, $I_t$ denotes the edgeless graph on $t$ vertices.

A main message of this survey is that flag algebra provides a logical framework for reasoning about limiting graphs in $\Phi$, and in particular for establishing bounds in asymptotic extremal problems. For example, it offers a syntax for expressing bounds such as
\[
T(H, \mathcal{F}) \leq c
\qquad\text{and}\qquad
M(s,t) \geq c'
\]
for $c,c' \in \mathbb{R}$, together with a semantics parameterised by limiting graphs in $\Phi$. In this way, such bounds become statements about limiting graphs. Flag algebra also comes with strategies for proving these bounds, some of which can be automated. In the next two sections, we present the syntax and semantics of this logic, and a popular proof strategy for it.

\section{Syntax and Semantics of Flag Algebra}
\label{sec:syntax-semantics}

The syntax of flag algebra consists of \emph{density expressions} $E$ and \emph{assertions} $A$.
Density expressions generalise pattern graphs $H$ and denote real-valued functions on limiting graphs.
Assertions express qualitative properties of limiting graphs using density expressions and logical connectives.

Formally, the syntax is defined as follows:
\begin{align*}
\text{Density Expressions}\quad E & ::= H \,\mid\, r \cdot E \,\mid\, 0 \,\mid\, E + E \,\mid\, 1 \,\mid\, E \cdot E,
\\
\text{Assertions}\quad A & ::= \lfalse \,\mid\, \ltrue \,\mid\, E \geq E \,\mid\, \neg A \,\mid\, A \lor A,
\end{align*}
where $H$ ranges over graphs and $r$ over real numbers. We denote the set of all density expressions by $\Expr$ and that of assertions by $\Assn$.
Density expressions are built from pattern graphs and the constants $0$ and $1$ using scalar multiplication, addition, and multiplication.
Assertions are constructed from the Boolean constants $\lfalse$ and $\ltrue$, comparisons between density expressions, negation, and disjunction.

Negation and disjunction have their usual meanings from classical logic, and induce other connectives (e.g., conjunction and implication) in the standard way:
\[
A \land A' \defeq \neg(\neg A \lor \neg A')
\qquad\text{and}\qquad
(A \implies A') \defeq \neg A \lor A'.
\]
Similarly, other comparison operators between density expressions (such as $=$, $>$, $\leq$, and $<$) can be defined using $\geq$ and the logical connectives.
For instance,
\[
(E_1 = E_2) \defeq (E_1 \geq E_2) \land (E_2 \geq E_1)
\quad\text{and}\quad
(E_1 < E_2) \defeq \neg (E_1 \geq E_2).
\]
Finally, we use the abbreviations $-E \defeq (-1)\cdot E$ and $E_1 - E_2 \defeq E_1 + (-E_2)$, and we identify real constants $r \in \mathbb{R}$ with the density expressions $r \cdot 1$.

The semantics of density expressions $E$ and assertions $A$ has the form
\[
\db{E} : \Phi \to \mathbb{R}
\qquad\text{and}\qquad
\db{A} : \Phi \to \mathbb{B},
\]
where $\mathbb{B} \defeq \{\strue,\sfalse\}$.
It is defined by induction on the structure of $E$ and $A$ using the ordered commutative algebra structure of $\mathbb{R}$ and the Boolean algebra structure of $\mathbb{B}$.
For a limiting graph $\phi \in \Phi$,
\begin{align*}
\db{H}\phi & \defeq \phi(H),
&
\db{r \cdot E}\phi & \defeq r \cdot \db{E}\phi,
\\
\db{0}\phi & \defeq 0,
&
\db{E_1 + E_2}\phi & \defeq \db{E_1}\phi + \db{E_2}\phi,
\\
\db{1}\phi & \defeq 1,
&
\db{E_1 \cdot E_2}\phi & \defeq \db{E_1}\phi \cdot \db{E_2}\phi,
\end{align*}
and
\begin{align*}
\db{\lfalse}\phi & \defeq \sfalse,
&
\db{\ltrue}\phi & \defeq \strue,
\\
\db{E_1 \geq E_2}\phi & \defeq
\begin{cases}
  \strue & \text{if } \db{E_1}\phi \geq \db{E_2}\phi,
  \\
  \sfalse & \text{otherwise},
\end{cases}
&
\db{\neg A}\phi & \defeq
\begin{cases}
  \strue & \text{if } \db{A}\phi = \sfalse,
  \\
  \sfalse & \text{otherwise},
\end{cases}
\\
\db{A_1 \lor A_2}\phi & \defeq
\begin{cases}
  \strue & \text{if } \db{A_1}\phi = \strue \text{ or } \db{A_2}\phi = \strue,
  \\
  \sfalse & \text{otherwise}.
\end{cases}
\end{align*}
We say that an assertion $A$ is \emph{valid} if $\db{A}\phi = \strue$ for every limiting graph $\phi \in \Phi$.
Note that the syntax and semantics of flag algebra coincide with those of arithmetic expressions and formulas in the quantifier-free fragment of standard first-order logic over the reals. Graphs $H$ in density expressions play the role of real-valued variables, density expressions are arithmetic expressions over these variables, and assertions are formulas over these arithmetic expressions. As a result, many standard reasoning principles from first-order logic apply to flag algebra directly.
For instance, the following assertions hold for all limiting graphs $\phi \in \Phi$:
\begin{align*}
  & \Big(H_1 \cdot (H_3 + H_2) + (4 \cdot H_3) \cdot H_1\Big) = \Big(H_1 \cdot H_2 + 5 \cdot (H_1 \cdot H_3)\Big),
  \\
  & \Big((H_1 + H_2 \geq 0) \wedge (0 \geq H_2 - H_3)\Big) \implies H_1 + H_3 \geq 0.
\end{align*}

\begin{lem}
  \label{lemma:basic-algebra}
  The rules of ordered commutative algebras are valid for density expressions in flag algebra. Also, the rules of Boolean algebras are valid assertions in flag algebra.
\end{lem}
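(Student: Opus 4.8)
The plan is to observe that the statement is really a soundness claim: every axiom of ordered commutative algebras, when read as an assertion in $\Assn$, is valid in the sense defined above, and likewise every axiom and inference rule of Boolean algebras preserves validity. Since validity is defined pointwise over limiting graphs $\phi \in \Phi$, it suffices to fix an arbitrary $\phi \in \Phi$ and check that each rule holds in the single structure obtained by interpreting density expressions via $\db{\cdot}\phi : \Expr \to \mathbb{R}$ and assertions via $\db{\cdot}\phi : \Assn \to \mathbb{B}$. So the first step is to make precise which rules we mean: for the algebraic part, the associativity, commutativity, and unit laws for $+$ and $\cdot$, distributivity, the laws governing $0$, $1$, and scalar multiplication by $r \in \mathbb{R}$, together with the order axioms ($\geq$ is a total preorder compatible with $+$ and with multiplication by nonnegative scalars); for the Boolean part, the usual axioms for $\lor$, $\land$, $\neg$, $\lfalse$, $\ltrue$ on $\mathbb{B}$.

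Next I would carry out the verification by a routine structural induction. The key point, already flagged in the paragraph preceding the lemma, is that $\db{\cdot}\phi$ is a homomorphism: by definition $\db{E_1 + E_2}\phi = \db{E_1}\phi + \db{E_2}\phi$, $\db{E_1 \cdot E_2}\phi = \db{E_1}\phi \cdot \db{E_2}\phi$, $\db{r \cdot E}\phi = r\cdot\db{E}\phi$, $\db{0}\phi = 0$, and $\db{1}\phi = 1$, with the interpretations living in the genuine ordered commutative $\mathbb{R}$-algebra $\mathbb{R}$. Hence any algebraic identity $E_1 = E_2$ that is an instance of an ordered-commutative-algebra axiom unwinds, after applying $\db{\cdot}\phi$ to both sides, to a true identity in $\mathbb{R}$; and by the semantic clause for $\geq$ this means $\db{E_1 \geq E_2}\phi = \db{E_2 \geq E_1}\phi = \strue$, so $\db{E_1 = E_2}\phi = \strue$. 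The same computation handles the order axioms, using that $\geq$ on $\mathbb{R}$ is total, transitive, reflexive, and compatible with addition and with multiplication by nonnegative reals. For the Boolean part, the clauses for $\db{\neg A}\phi$ and $\db{A_1 \lor A_2}\phi$ are literally the operations of the two-element Boolean algebra $\mathbb{B}$, and $\land$ is defined as $\neg(\neg\,\cdot\,\lor\,\neg\,\cdot\,)$, so every Boolean-algebra axiom becomes a tautology on $\mathbb{B}$, which one checks by the finite truth-table for $\mathbb{B}$. Since $\phi$ was arbitrary, validity follows in every case.

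There is no serious obstacle here; the content of the lemma is conceptual rather than technical, namely the recognition that the semantics factors through $\mathbb{R}$ and $\mathbb{B}$ with all structure preserved, so that soundness of equational and Boolean reasoning is inherited for free. The only thing requiring a little care is bookkeeping: one should be explicit that ``the rules of ordered commutative algebras'' include both the equational theory and the order axioms, and that derived connectives and comparisons ($=$, $>$, $\leq$, $<$, $\land$, $\implies$, subtraction, and the identification of $r$ with $r\cdot 1$) are abbreviations, so that no separate verification is needed for them beyond what the defining clauses give. I would therefore present the proof compactly: state the homomorphism property of $\db{\cdot}\phi$, note that $(\mathbb{R},+,\cdot,0,1,\geq)$ is an ordered commutative $\mathbb{R}$-algebra and $(\mathbb{B},\lor,\land,\neg,\sfalse,\strue)$ a Boolean algebra, and conclude that each axiom and rule, interpreted at an arbitrary $\phi$, reduces to a true statement in the corresponding target structure, hence is valid.
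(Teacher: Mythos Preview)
Your proposal is correct. The paper does not actually give a proof of this lemma; it is stated immediately after the observation that the semantics coincides with that of the quantifier-free fragment of first-order logic over the reals, and is evidently regarded as immediate from that observation. Your argument spells out exactly this: $\db{\cdot}\phi$ is by construction a structure-preserving map into $(\mathbb{R},+,\cdot,0,1,\geq)$ and $(\mathbb{B},\lor,\neg,\sfalse,\strue)$, so every axiom of ordered commutative algebras and of Boolean algebras pulls back to a valid assertion. There is nothing to correct; your write-up simply makes explicit what the paper leaves implicit.
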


What is special about flag algebra, compared with ordinary first-order logic over reals, is that the ``variables'' (i.e., graphs $H$) cannot be assigned arbitrary real values: their values must arise simultaneously from a single limiting graph $\phi$, and are therefore correlated.
One simple consequence is that two variables representing isomorphic graphs always have the same value.
\begin{lem}
  \label{lemma:isomorphism}
  If graphs $H$ and $H'$ are isomorphic, then the equation $H = H'$ is valid.
\end{lem}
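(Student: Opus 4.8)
The plan is to reduce the statement to an identity about the density function $p$ itself. By the definition of validity and the semantics clause $\db{H}\phi = \phi(H)$, the equation $H = H'$ is valid precisely when $\phi(H) = \phi(H')$ for every limiting graph $\phi \in \Phi$. So it suffices to show that $\phi(H) = \phi(H')$ whenever $H \simeq H'$.

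First I would unfold the definition of a limiting graph: every $\phi \in \Phi$ arises as $\phi(\cdot) = \lim_{n\to\infty} p(\cdot, G_n)$ for some convergent sequence $(G_n)_n$. Hence it is enough to establish that $p(H, G) = p(H', G)$ for every host graph $G$; the equality then passes to the limit. This last step is the crux: I would observe that $p(H, G) = \mathbb{P}[G[\mathbf{U}] \simeq H]$ where $\mathbf{U}$ is a uniformly random $v(H)$-subset of $V(G)$. Since $H \simeq H'$, in particular $v(H) = v(H')$, so $\mathbf{U}$ has the same distribution in both cases; and because isomorphism is transitive, the event $G[\mathbf{U}] \simeq H$ coincides with the event $G[\mathbf{U}] \simeq H'$. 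Therefore the two probabilities are literally equal, not merely asymptotically so. (The degenerate case $v(H) > v(G)$ is handled by the convention $p(H,G) = 0 = p(H',G)$, since $v(H) = v(H')$.)

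The main obstacle here is not mathematical difficulty but making sure the argument is routed through the right layer of the definitions: one must resist the temptation to reason about the limiting graph $\phi$ abstractly and instead push the isomorphism-invariance down to the finite level where it is transparent, then lift it back up through the limit. Once that routing is clear, each individual step---same vertex count, same distribution of $\mathbf{U}$, transitivity of $\simeq$, continuity of $\lim$---is immediate. I would also remark in passing that this lemma is exactly the ``injective up to isomorphism'' observation noted at the end of \Cref{sec:induced-subgraph-density}, read in one direction, and that it justifies treating $H$ in a density expression as depending only on the isomorphism type of the graph.
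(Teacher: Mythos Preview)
Your proof is correct and follows essentially the same route as the paper's: pick an arbitrary limiting graph $\phi$, realise it as a limit of $p(\cdot,G_n)$ along a convergent sequence, use the isomorphism-invariance $p(H,G_n)=p(H',G_n)$ at the finite level, and pass to the limit. The paper simply asserts $p(H,G_n)=p(H',G_n)$ without justification, whereas you spell out why (same vertex count, same event by transitivity of $\simeq$, and the degenerate convention), which is a harmless and arguably helpful addition.
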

\begin{proof}
  Let $\phi$ be an arbitrary limiting graph. Then, there exists 
  a converging sequence $(G_n)_{n \in \mathbb{N}}$ of graphs such that 
  $\lim_{n \to \infty} p(F,G_n) = \phi(F)$ for all graphs $F \in \mathcal{G}$. Since $H$ and $H'$ are isomorphic, we have $p(H,G_n) = p(H',G_n)$ for all $n \in \mathbb{N}$. Therefore, $\phi(H) = \lim_{n \to \infty} p(H,G_n) = \lim_{n \to \infty} p(H',G_n) = \phi(H')$. Since $\phi$ was arbitrary, the equation $H = H'$ is valid.
\end{proof}

The next useful consequences follow from the connection between  limiting graphs and probability theory. For each graph $H$ and limiting graph $\phi$, the value $\phi(H)$ intuitively means the probability that, if we choose a subset of $v(H)$ vertices uniformly at random in the infinite graph represented by $\phi$, the induced subgraph on the chosen vertices is isomorphic to $H$. From this intuition, we see that $\phi(H) \in [0,1]$ and that the sum of $\phi(H')$ over all non-isomorphic graphs $H'$ on a fixed vertex set $[n]$ equals $1$. For a finite subset $V$ of $\mathbb{N}$, we say that a set of graphs $\mathcal{H}_V$ is a \emph{complete set of graphs on $V$} if every graph on the vertex set $V$ is isomorphic to exactly one graph in $\mathcal{H}_V$. Using this terminology, 
we restate the above observations formally in the following lemmas.
\begin{lem}
  \label{lemma:non-negativity}
  The assertion $1 \geq H \wedge H \geq 0$ is valid for every graph $H$.
\end{lem}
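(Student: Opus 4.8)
The plan is to unfold the semantics of the assertion $1 \geq H \wedge H \geq 0$ and reduce it to the two numerical inequalities $0 \leq \phi(H) \leq 1$, which we then establish by transferring the corresponding facts about finite graphs through the limit. First I would fix an arbitrary limiting graph $\phi \in \Phi$ and, using the semantic clauses for $\wedge$ (via its definition in terms of $\neg$ and $\lor$) and for $\geq$, observe that $\db{1 \geq H \wedge H \geq 0}\phi = \strue$ holds exactly when $\db{1}\phi \geq \db{H}\phi$ and $\db{H}\phi \geq \db{0}\phi$, i.e. when $1 \geq \phi(H)$ and $\phi(H) \geq 0$. So it suffices to prove these two real inequalities.

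Next I would invoke the definition of a limiting graph: there is a convergent sequence $(G_n)_{n\in\mathbb{N}}$ of graphs with $\phi(H) = \lim_{n\to\infty} p(H,G_n)$. For each $n$, the quantity $p(H,G_n) = \mathbb{P}[G_n[\mathbf{U}] \simeq H]$ is a probability, hence $0 \leq p(H,G_n) \leq 1$; when $v(H) > v(G_n)$ the convention $p(H,G_n)=0$ still lies in $[0,1]$. Since the closed interval $[0,1]$ is closed under limits, we get $0 \leq \phi(H) \leq 1$. (If one prefers to avoid the probabilistic reading, $p(H,G_n)$ is the ratio of the number of $v(H)$-subsets $U$ of $V(G_n)$ with $G_n[U]\simeq H$ to $\binom{v(G_n)}{v(H)}$, which is manifestly between $0$ and $1$.) Feeding this back into the semantic reduction of the first paragraph shows $\db{1 \geq H \wedge H \geq 0}\phi = \strue$, and since $\phi$ was arbitrary the assertion is valid.

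The only mildly technical point is the bookkeeping around $\wedge$: since conjunction is a defined connective, one must chase it through $\neg(\neg A \lor \neg A')$ to see that $\db{A \wedge A'}\phi = \strue$ iff both conjuncts hold; this is routine from the clauses for $\neg$ and $\lor$, and could even be cited as an instance of \Cref{lemma:basic-algebra}. Beyond that, there is no real obstacle: the argument is the same ``$p(\_,G_n)$ is a probability, take limits'' pattern used in the proof of \Cref{lemma:isomorphism}, and the closedness of $[0,1]$ under pointwise limits is exactly why the product topology on $[0,1]^{\mathcal{G}}$ was the right ambient space.
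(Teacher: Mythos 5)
Your proposal is correct and follows essentially the same route as the paper's own proof: fix an arbitrary limiting graph $\phi$, take a convergent sequence $(G_n)_n$ witnessing it, observe that each $p(H,G_n)$ lies in $[0,1]$, and pass to the limit. The only difference is that you spell out the semantic unfolding of the defined connective $\wedge$ explicitly, which the paper leaves implicit.
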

\begin{proof}
  Consider an arbitrary limiting graph $\phi$. Then, there exists 
  a converging sequence $(G_n)_{n \in \mathbb{N}}$ of graphs such that 
  $\lim_{n \to \infty} p(F,G_n) = \phi(F)$ for all graphs $F \in \mathcal{G}$. Since $1 \geq p(H,G_n) \geq 0$ for all $n \in \mathbb{N}$, we have
  \begin{align*}
  & \db{1}\phi = 1 \geq \lim_{n \to \infty} p(H,G_n) = \phi(H) = \db{H}\phi, \text{ and}
  \\
  & \db{H}\phi = \phi(H) = \lim_{n \to \infty} p(H,G_n) \geq 0 = \db{0}\phi.
  \end{align*}
  Since $\phi$ was arbitrary, the assertion $1 \geq H \wedge H \geq 0$ is valid.
\end{proof}
\begin{lem}
  \label{lemma:sum-to-one}
  Let $V$ be a finite subset of $\mathbb{N}$, and let $\mathcal{H}_V$ be a complete set of graphs on $V$. Then
  \[
  \sum_{H \in \mathcal{H}_V} H = 1
  \]
  is valid.
\end{lem}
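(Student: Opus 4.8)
The plan is to follow the same template as the proof of \Cref{lemma:non-negativity}: fix an arbitrary limiting graph $\phi$, pick a convergent sequence $(G_n)_{n\in\mathbb{N}}$ witnessing it, establish the identity at the level of each sufficiently large $G_n$, and then pass to the limit. The only genuinely new ingredient is an elementary probabilistic partition argument. As a preliminary step, I would unfold the semantics of the left-hand side: since $\sum_{H\in\mathcal{H}_V}H$ is an abbreviation for an iterated binary sum, a straightforward induction on $|\mathcal{H}_V|$ using the clause $\db{E_1+E_2}\phi = \db{E_1}\phi + \db{E_2}\phi$ (equivalently, \Cref{lemma:basic-algebra}) gives $\db{\sum_{H\in\mathcal{H}_V}H}\phi = \sum_{H\in\mathcal{H}_V}\phi(H)$, a finite sum of reals. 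So it suffices to show $\sum_{H\in\mathcal{H}_V}\phi(H) = 1$.

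The key combinatorial claim is that for any host graph $G$ with $v(G)\ge |V|$ one has $\sum_{H\in\mathcal{H}_V}p(H,G) = 1$. To see this, recall $p(H,G) = \mathbb{P}\big[G[\mathbf{U}]\simeq H\big]$ for a uniformly random $\mathbf{U}\in\binom{V(G)}{|V|}$. For any outcome $\mathbf{U}$, the graph $G[\mathbf{U}]$ has exactly $|V|$ vertices, hence via any bijection $\mathbf{U}\to V$ it is isomorphic to a graph on vertex set $V$, and therefore isomorphic to exactly one member of $\mathcal{H}_V$ by the defining property of a complete set of graphs on $V$. Thus the events $\{G[\mathbf{U}]\simeq H\}_{H\in\mathcal{H}_V}$ partition the sample space, and summing their probabilities yields $1$.

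It then remains to assemble the pieces. The sequence $(G_n)_n$ is increasing, so $v(G_n)\ge |V|$ for all $n\ge n_0$ with some threshold $n_0$; for those $n$ the claim gives $\sum_{H\in\mathcal{H}_V}p(H,G_n) = 1$. Letting $n\to\infty$ and using that a finite sum of convergent sequences converges to the sum of the limits,
\[
\sum_{H\in\mathcal{H}_V}\phi(H) = \sum_{H\in\mathcal{H}_V}\lim_{n\to\infty}p(H,G_n) = \lim_{n\to\infty}\sum_{H\in\mathcal{H}_V}p(H,G_n) = 1.
\]
Since $\phi$ was arbitrary, the equation $\sum_{H\in\mathcal{H}_V}H = 1$ is valid.

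I do not anticipate any serious obstacle; the argument is a routine limiting argument layered on a partition identity. The one point to handle carefully is that $\sum_{H\in\mathcal{H}_V}p(H,G_n)$ need not equal $1$ for small $n$ — it is $0$ whenever $v(G_n) < |V|$, by the convention $p(H,G) = 0$ when $v(H) > v(G)$ — so the per-graph identity must be invoked only for $n\ge n_0$, which is harmless since the sequence is increasing and only the limit matters. A second, trivial, point is the degenerate case $V = \emptyset$: then $\mathcal{H}_V$ consists of the single empty graph and $p(\emptyset,G) = 1$ for every $G$, so the statement still holds.
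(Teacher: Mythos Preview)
Your proof is correct and follows essentially the same approach as the paper: pick a convergent sequence witnessing $\phi$, use the partition identity $\sum_{H\in\mathcal{H}_V} p(H,G_n)=1$, and pass to the limit. You are in fact slightly more careful than the paper, which asserts the identity ``for all $n$'' without mentioning the threshold $n_0$ needed to ensure $v(G_n)\ge |V|$.
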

\begin{proof}
  Let $\phi$ be an arbitrary limiting graph. Then, there exists 
  a converging sequence $(G_n)_{n \in \mathbb{N}}$ of graphs such that 
  $\lim_{n \to \infty} p(F,G_n) = \phi(F)$ for all graphs $F \in \mathcal{G}$. Since $\mathcal{H}_V$ is a complete set of graphs on $V$, we have
  \[
  \sum_{H \in \mathcal{H}_V} p(H,G_n) = 1
  \]
  for all $n \in \mathbb{N}$. Therefore,
  \[
  \db{\sum_{H \in \mathcal{H}_V} H}\phi =
  \sum_{H \in \mathcal{H}_V} \phi(H) = \sum_{H \in \mathcal{H}_V} \lim_{n \to \infty} p(H,G_n) = \lim_{n \to \infty} \sum_{H \in \mathcal{H}_V} p(H,G_n) = 1 = \db{1}\phi.
  \]
  Since $\phi$ was arbitrary, the equation $\sum_{H \in \mathcal{H}_V} H = 1$ is valid.
\end{proof}

We also note two further consequences that are heavily used in applications of flag algebra.
They say that the density expression $G$ for a graph $G$ can be written as a linear combination of larger graphs, and that the multiplicative unit $1$ and the multiplication in density expressions can be eliminated.
In the lemmas below, let $V$ be a finite subset of $\mathbb{N}$, and let $\mathcal{H}_V$ be a complete set of graphs on $V$.
\begin{lem}
  \label{lemma:decomposition}
  For every graph $H$ with $v(H) \leq |V|$,
  \[
  H = \sum_{H' \in \mathcal{H}_V} p(H,H') \cdot H'
  \]
  is valid.
\end{lem}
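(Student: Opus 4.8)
The plan is to reuse the template of the earlier validity lemmas (e.g.\ \Cref{lemma:isomorphism,lemma:sum-to-one}): fix an arbitrary limiting graph $\phi$, choose a convergent sequence $(G_n)_{n \in \mathbb{N}}$ of graphs with $\lim_{n \to \infty} p(F, G_n) = \phi(F)$ for all $F \in \mathcal{G}$, prove the identity at the level of the finite densities $p(\cdot, G_n)$, and then let $n \to \infty$. Unfolding the semantics, what must be shown is $\phi(H) = \sum_{H' \in \mathcal{H}_V} p(H, H') \cdot \phi(H')$.

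The core of the argument is the finite identity
\[
p(H, G) = \sum_{H' \in \mathcal{H}_V} p(H, H') \cdot p(H', G),
\qquad\text{valid whenever } v(G) \geq |V|,
\]
which I would establish by a two-stage sampling (``tower rule'') argument. Drawing a uniformly random $v(H)$-subset $\mathbf{U}$ of $V(G)$ is distributionally the same as first drawing a uniformly random $|V|$-subset $\mathbf{W}$ of $V(G)$ and then a uniformly random $v(H)$-subset of $\mathbf{W}$; hence, conditioning on $\mathbf{W}$, we get $p(H, G) = \mathbb{E}_{\mathbf{W}}\big[p(H, G[\mathbf{W}])\big]$. Since $p(H, \cdot)$ depends only on the isomorphism type of its second argument and $\mathcal{H}_V$ is a complete set of graphs on $V$, every realisation $G[\mathbf{W}]$ is isomorphic to a unique $H' \in \mathcal{H}_V$ with $p(H, G[\mathbf{W}]) = p(H, H')$; grouping the outcomes of $\mathbf{W}$ by this $H'$ rewrites the expectation as $\sum_{H' \in \mathcal{H}_V} \mathbb{P}[G[\mathbf{W}] \simeq H'] \cdot p(H, H')$, which is the right-hand side. (The hypothesis $v(H) \leq |V|$ ensures that no density appearing here falls under the $p(\cdot,\cdot) = 0$ convention.)

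Since the sequence $(G_n)_n$ is increasing, $v(G_n) \to \infty$, so the finite identity applies for all sufficiently large $n$. Taking $n \to \infty$ and using the finiteness of $\mathcal{H}_V$ to exchange the limit with the sum gives
\[
\phi(H) = \lim_{n \to \infty} p(H, G_n) = \sum_{H' \in \mathcal{H}_V} p(H, H') \cdot \lim_{n \to \infty} p(H', G_n) = \sum_{H' \in \mathcal{H}_V} p(H, H') \cdot \phi(H'),
\]
and since $\phi$ was arbitrary, the equation $H = \sum_{H' \in \mathcal{H}_V} p(H,H') \cdot H'$ is valid. The only step that takes any real thought is the two-stage sampling identity---specifically, the observation that sampling a $v(H)$-subset directly and sampling it through an intermediate $|V|$-subset yield the same distribution; everything else is routine bookkeeping with the semantics together with a limit interchange over a finite index set.
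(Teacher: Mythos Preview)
Your proof is correct and follows essentially the same approach as the paper: fix an arbitrary limiting graph $\phi$, realise it by a convergent sequence $(G_n)_n$, apply the finite decomposition identity $p(H,G_n)=\sum_{H'\in\mathcal{H}_V}p(H,H')\,p(H',G_n)$ for all sufficiently large $n$, and pass to the limit using the finiteness of $\mathcal{H}_V$. The only difference is that the paper cites this finite identity as Lemma~2.2 of~\cite{Razborov2007}, whereas you supply the (standard) two-stage sampling argument directly, making your version self-contained.
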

\begin{proof}
  We use Lemma 2.2 from~\cite{Razborov2007} in the proof. 

  Let $\phi$ be an arbitrary limiting graph. Then, there exists 
  a converging sequence $(G_n)_{n \in \mathbb{N}}$ of graphs such that 
  $\lim_{n \to \infty} p(F,G_n) = \phi(F)$ for all graphs $F \in \mathcal{G}$. Pick $n_0 \in \mathbb{N}$ such that $v(G_{n_0}) \geq |V|$. Then, since $\mathcal{H}_V$ is a complete set of graphs on $V$, 
  by Lemma 2.2 from~\cite{Razborov2007}, we have
  \[
  p(H,G_n) = \sum_{H' \in \mathcal{H}_V} p(H,H') \cdot p(H',G_n)
  \]
  for all $n \geq n_0$. Therefore,
  \begin{align*}
  \db{H}\phi = \phi(H) & = \lim_{n \to \infty} p(H,G_n)
  \\
  & = \lim_{n \to \infty} \sum_{H' \in \mathcal{H}_V} p(H,H') \cdot p(H',G_n)
  \\
  & = \sum_{H' \in \mathcal{H}_V} p(H,H') \cdot \lim_{n \to \infty} p(H',G_n)
  \\
  & = \sum_{H' \in \mathcal{H}_V} p(H,H') \cdot \phi(H')
  \\
  & = \sum_{H' \in \mathcal{H}_V} p(H,H') \cdot \db{H'}\phi
  = \db{\sum_{H' \in \mathcal{H}_V} p(H,H') \cdot H'}\phi.
  \end{align*}
  Since $\phi$ was arbitrary, the equation
  $H = \sum_{H' \in \mathcal{H}_V} p(H,H') \cdot H'$
  is valid.
\end{proof}
\begin{lem}
  \label{lemma:eliminate-multiplication}
  If $\emptygraph$ is the graph with the empty vertex set, then
  \[
  1 = \emptygraph
  \]
  is valid. Also, for every pair of graphs $H_1$ and $H_2$ with $v(H_1) + v(H_2) \leq |V|$, we have the valid equation
  \[
  H_1 \cdot H_2 = \sum_{H \in \mathcal{H}_V} r_H \cdot H,
  \]
  where
  \[
  r_H \defeq \mathbb{P}\big[H[\mathbf{U}_1] \simeq H_1\ \, \mathrm{and}\ \, H[\mathbf{U}_2] \simeq H_2\big],
  \]
  with $\mathbf{U}_1$ and $\mathbf{U}_2$ being uniformly random disjoint subsets of $V(H)$ of sizes $v(H_1)$ and $v(H_2)$, respectively.
\end{lem}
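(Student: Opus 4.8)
The plan is to establish both equations by proving validity pointwise. Fix an arbitrary limiting graph $\phi \in \Phi$ together with a witnessing convergent sequence $(G_n)_{n \in \mathbb{N}}$ with $p(F, G_n) \to \phi(F)$ for every graph $F \in \mathcal{G}$, and evaluate both sides of each equation at $\phi$. For the first equation this is immediate: the only subset of $V(G_n)$ of size $v(\emptygraph) = 0$ is $\emptyset$, and $G_n[\emptyset] = \emptygraph$, so $p(\emptygraph, G_n) = 1$ for every $n$; hence $\db{\emptygraph}\phi = \phi(\emptygraph) = \lim_n p(\emptygraph, G_n) = 1 = \db{1}\phi$.

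For the second equation, I would first use $\db{H_1 \cdot H_2}\phi = \db{H_1}\phi \cdot \db{H_2}\phi = \phi(H_1)\phi(H_2)$ and, on the other side, $\db{\sum_{H \in \mathcal{H}_V} r_H \cdot H}\phi = \sum_{H \in \mathcal{H}_V} r_H \cdot \phi(H)$, so that (since $\mathcal{H}_V$ is finite and each $r_H \in [0,1]$) it suffices to prove the finite-graph identity
\[
p(H_1, G) \cdot p(H_2, G) = \sum_{H \in \mathcal{H}_V} r_H \cdot p(H, G) + o(1) \qquad (v(G) \to \infty),
\]
and then take limits along $(G_n)$, pulling the limit inside the finite sum. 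This identity is essentially Lemma 2.2 of~\cite{Razborov2007} specialised to the unlabelled case and could be cited directly, as in the proof of \Cref{lemma:decomposition}; but I would prefer to give the short probabilistic argument, which also explains where the coefficients $r_H$ come from.

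To prove the finite-graph identity, run the following experiment on a host $G$ with $v(G) \geq |V|$: sample a uniformly random $|V|$-subset $\mathbf{U}$ of $V(G)$, and inside $\mathbf{U}$ sample uniformly random \emph{disjoint} subsets $\mathbf{U}_1, \mathbf{U}_2$ of sizes $v(H_1), v(H_2)$ (possible since $v(H_1)+v(H_2)\le|V|$). Conditioning on the isomorphism type of $G[\mathbf{U}]$, and using that $G[\mathbf{U}_i] = (G[\mathbf{U}])[\mathbf{U}_i]$, that $(\mathbf{U}_1,\mathbf{U}_2)$ is uniform over disjoint pairs of the prescribed sizes inside $\mathbf{U}$, and that $r_H$ is isomorphism-invariant (being defined as a probability), one obtains
\[
\mathbb{P}\big[G[\mathbf{U}_1] \simeq H_1 \text{ and } G[\mathbf{U}_2] \simeq H_2\big] = \sum_{H \in \mathcal{H}_V} r_H \cdot p(H, G).
\]
On the other hand, by symmetry each $\mathbf{U}_i$ is marginally a uniformly random $v(H_i)$-subset of $V(G)$, so $\mathbb{P}[G[\mathbf{U}_i] \simeq H_i] = p(H_i, G)$; comparing with two \emph{independent} uniformly random subsets $\mathbf{W}_1, \mathbf{W}_2$ of sizes $v(H_1), v(H_2)$ — for which $p(H_1,G) p(H_2,G) = \mathbb{P}[G[\mathbf{W}_1]\simeq H_1, G[\mathbf{W}_2]\simeq H_2]$ by independence — the only discrepancy is the disjointness/containment constraint, and $\mathbb{P}[\mathbf{W}_1 \cap \mathbf{W}_2 \neq \emptyset] = O(1/v(G))$. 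Conditioned on disjointness, $(\mathbf{W}_1, \mathbf{W}_2)$ has the same law as $(\mathbf{U}_1, \mathbf{U}_2)$ — a one-line symmetry check shows both are uniform over disjoint pairs of those sizes. Putting these facts together yields the displayed finite-graph identity, and hence, letting $n \to \infty$, $\phi(H_1)\phi(H_2) = \sum_{H \in \mathcal{H}_V} r_H \phi(H)$. Since $\phi \in \Phi$ was arbitrary, both equations are valid.

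The main obstacle is entirely routine bookkeeping: verifying that conditioning on disjointness does not distort the joint law of $(\mathbf{W}_1, \mathbf{W}_2)$, that this law coincides with the one obtained via the auxiliary $|V|$-set $\mathbf{U}$, and that the overlap probability $\mathbb{P}[\mathbf{W}_1 \cap \mathbf{W}_2 \neq \emptyset]$ is $o(1)$ with a bound uniform in everything but $v(G)$. Once these symmetry and estimate lemmas are in hand, the rest is unwinding the definitions of $\db{\cdot}$, the construction of $r_H$, and the algebra of limits.
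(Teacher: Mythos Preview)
Your proposal is correct and follows essentially the same route as the paper: fix $\phi$ and a witnessing sequence, handle $1=\emptygraph$ by noting $p(\emptygraph,G_n)=1$, and for the product reduce to an $o(1)$ finite-graph identity and pass to the limit. The paper obtains that identity by citing Lemmas~2.2 and~2.3 of~\cite{Razborov2007} as a black box (exactly the option you mention and set aside), whereas you unpack their content with the disjoint-versus-independent sampling comparison; the mathematics is the same.
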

\begin{proof}
  We use Lemmas 2.2 and 2.3 from~\cite{Razborov2007} in the second part of the proof.

  Let $\phi$ be an arbitrary limiting graph, and let $(G_n)_{n \in \mathbb{N}}$ be a converging sequence of graphs such that 
  $\lim_{n \to \infty} p(F,G_n) = \phi(F)$ for all graphs $F \in \mathcal{G}$. 
  
  Since $p(\emptyset,G_n) = 1$ for all $n \in \mathbb{N}$, we have
  \[
  \db{1}\phi = 1 = \lim_{n \to \infty} p(\emptyset,G_n) = \phi(\emptyset) = \db{\emptygraph}\phi.
  \]
  The validity of $1 = \emptygraph$ follows from the arbitrariness of $\phi$.

  Pick $n_0 \in \mathbb{N}$ such that $v(G_{n_0}) \geq |V|$. Then, for all $n \geq n_0$, applying the aforementioned lemmas from~\cite{Razborov2007} yields
  \[
  \left|
  p(H_1,G_n) \cdot p(H_2,G_n) - 
  \sum_{H \in \mathcal{H}_V} r_H \cdot p(H,G_n)
  \right| 
  \leq
  \frac{(v(H_1) + v(H_2))^C}{v(G_n)}
  \]
  for some constant $C$. Since $v(G_n) \to \infty$ as $n \to \infty$, we have
  \begin{align*}
  \lim_{n \to \infty} (p(H_1,G_n) \cdot p(H_2,G_n)) 
  = 
  \lim_{n \to \infty} \sum_{H \in \mathcal{H}_V} r_H \cdot p(H,G_n).
  \end{align*} 
  Thus,
  \begin{align*}
  \db{H_1 \cdot H_2}\phi 
  = \phi(H_1) \cdot \phi(H_2)
  & {} =
  \lim_{n \to \infty} p(H_1,G_n) \cdot \lim_{n \to \infty} p(H_2,G_n)
  \\
  & {} =
  \lim_{n \to \infty} (p(H_1,G_n) \cdot p(H_2,G_n)) 
  \\
  & {} =
  \lim_{n \to \infty} \sum_{H \in \mathcal{H}_V} r_H \cdot p(H,G_n)
  \\
  & {} =
  \sum_{H \in \mathcal{H}_V} r_H \cdot \lim_{n \to \infty} p(H,G_n)
  = 
  \sum_{H \in \mathcal{H}_V} r_H \cdot \phi(H)
  = \db{\sum_{H \in \mathcal{H}_V} r_H \cdot H}\phi.
  \end{align*}
  Since $\phi$ was arbitrary, the equation
  \[
  H_1 \cdot H_2 = \sum_{H \in \mathcal{H}_V} r_H \cdot H
  \]
  is valid, as claimed.
\end{proof}

We now illustrate the syntax and semantics of flag algebra by expressing the asymptotic extremal problems from \Cref{sec:induced-subgraph-density}.
For a graph $H$ and a finite set of forbidden graphs $\mathcal{F}$ such that there exists at least one graph avoiding every graph in $\mathcal{F}$, the Tur\'an-type problem is to find the smallest $c \in \mathbb{R}$ such that the implication
\begin{equation}
  \label{eqn:turan-assertion}
\left(\bigwedge_{F \in \mathcal{F}} F = 0 \right) \implies H \leq c
\end{equation}
is valid in flag algebra, i.e., it holds for all limiting graphs $\phi$.
Here, $c$ is an abbreviation of the density expression $c \cdot 1$.
The antecedent $\bigwedge_{F \in \mathcal{F}} F = 0$ expresses that $\phi$ avoids all forbidden graphs in $\mathcal{F}$, and the consequent $H \leq c$ states that the density of $H$ in $\phi$ is at most $c$.

For the Ramsey multiplicity problem, given integers $s,t \geq 2$, we seek the largest $c' \in \mathbb{R}$ such that the following inequality is valid in flag algebra:
\begin{equation}
  \label{eqn:ramsey-assertion}
K_s + I_t \geq c'.
\end{equation}
Here, $I_t$ denotes the edgeless graph with $t$ vertices.

How does flag algebra help us to solve the problems of finding such optimal constants $c$ and $c'$ in \Cref{eqn:turan-assertion,eqn:ramsey-assertion}?
The answer lies in reasoning principles for deriving valid assertions in flag algebra. 
We give a glimpse of this aspect with two examples here.

Consider the Tur\'an-type problem with $H = K_2$ and $\mathcal{F} = \{K_3\}$, which asks for the best upper bound on the asymptotic edge density of triangle-free graphs.
In the language of flag algebra, this asks for the smallest $c$ such that the implication
\[
K_3 = 0 \implies K_2 \leq c
\]
is valid.
By Mantel's theorem~\cite{Mantel1907}, the optimal constant is $c = 1/2$.
We prove validity for $c = 1/2$.

Write $\Ithree$, $\Ethree$, $\Pthree$, and $\Kthree$ for the edgeless graph, the single-edge graph, the two-edge graph, and the complete graph $K_3$ on the vertex set $[3]$, respectively.\footnote{The exact correspondence between the dots in these notations and the labels $1,2,3$ is immaterial; any fixed choice yields isomorphic graphs.}
We use the following two valid assertions:
\begin{align}
  & K_2 = \Big(0 \cdot \Ithree + \tfrac{1}{3} \cdot \Ethree + \tfrac{2}{3} \cdot \Pthree + 1 \cdot \Kthree\Big),
  \label{eqn:mantel-density-decomposition}
  \\[2ex]
  & \Big(\Ithree - \tfrac{1}{3} \cdot \Ethree - \tfrac{1}{3} \cdot \Pthree + \Kthree\Big) \,\geq\, 0.
  \label{eqn:mantel-key-inequality}
\end{align}
The first is an instance of \Cref{lemma:decomposition}; the second can be proved using the reasoning principle explained in the next section.
Assuming $K_3 = 0$, we derive the desired bound as follows (each step holds for all $\phi \in \Phi$ with $\phi(K_3)=0$):
\begin{align*}
  K_2
  & = 0 \cdot \Ithree + \tfrac{1}{3} \cdot \Ethree + \tfrac{2}{3} \cdot \Pthree + 1 \cdot \Kthree
  && \text{by \Cref{eqn:mantel-density-decomposition}}
  \\
  & \leq
  \begin{aligned}[t]
  & \left(0 \cdot \Ithree + \tfrac{1}{3} \cdot \Ethree + \tfrac{2}{3} \cdot \Pthree + 1 \cdot \Kthree \right)
  \\
  & {} + \tfrac{1}{2} \cdot
  \left(
  \Ithree - \tfrac{1}{3} \cdot \Ethree - \tfrac{1}{3} \cdot \Pthree + \Kthree\right)
  \end{aligned}
  && \text{by \Cref{eqn:mantel-key-inequality,lemma:basic-algebra}}
  \\
  & = \tfrac{1}{2} \cdot \Ithree + \tfrac{1}{6} \cdot \Ethree + \tfrac{1}{2} \cdot \Pthree + \tfrac{3}{2} \cdot \Kthree
  && \text{by \Cref{lemma:basic-algebra}}
  \\
  & = \tfrac{1}{2} \cdot \Ithree + \tfrac{1}{6} \cdot \Ethree + \tfrac{1}{2} \cdot \Pthree + \tfrac{1}{2} \cdot \Kthree
  && \text{since } \Kthree = K_3 = 0
  \\
  & \leq \tfrac{1}{2} \cdot \Ithree + \tfrac{1}{2} \cdot \Ethree + \tfrac{1}{2} \cdot \Pthree + \tfrac{1}{2} \cdot \Kthree
  && \text{by \Cref{lemma:non-negativity,lemma:basic-algebra}}
  \\
  & = \tfrac{1}{2} \cdot \left(\Ithree + \Ethree + \Pthree + \Kthree\right)
  && \text{by \Cref{lemma:basic-algebra}}
  \\
  & = \tfrac{1}{2} \cdot 1
  && \text{by \Cref{lemma:sum-to-one}}
  \\
  & = \tfrac{1}{2}.
\end{align*}

As a second example, consider the Ramsey multiplicity problem with $s=t=3$, which asks for the largest $c'$ such that
\[
I_3 + K_3 \geq c'
\]
is valid in flag algebra.
By Goodman's result~\cite{Goodman1959}, the optimal constant is $c' = 1/4$.
We prove validity for $c' = 1/4$ as follows:
\begin{align*}
  I_3 + K_3
  & = 1 \cdot \Ithree + 0 \cdot \Ethree + 0 \cdot \Pthree + 1 \cdot \Kthree
  \\
  & \geq
  \begin{aligned}[t]
  & \left(1 \cdot \Ithree + 0 \cdot \Ethree + 0 \cdot \Pthree + 1 \cdot \Kthree\right)
  \\
  & {} -
  \tfrac{3}{4} \cdot
  \left(
  \Ithree - \tfrac{1}{3} \cdot \Ethree - \tfrac{1}{3} \cdot \Pthree + \Kthree
  \right)
  \end{aligned}
  && \text{by \Cref{eqn:mantel-key-inequality,lemma:basic-algebra}}
  \\
  & = \tfrac{1}{4} \cdot \Ithree + \tfrac{1}{4} \cdot \Ethree + \tfrac{1}{4} \cdot \Pthree + \tfrac{1}{4} \cdot \Kthree
  && \text{by \Cref{lemma:basic-algebra}}
  \\
  & = \tfrac{1}{4} \cdot (\Ithree + \Ethree + \Pthree + \Kthree)
  && \text{by \Cref{lemma:basic-algebra}}
  \\
  & = \tfrac{1}{4} \cdot 1
  && \text{by \Cref{lemma:sum-to-one}}
  \\
  & = \tfrac{1}{4}.
\end{align*}

Both examples use a step that has not yet been justified, namely the inequality in \Cref{eqn:mantel-key-inequality}.
The next section explains a reasoning principle of flag algebra that allows one to derive such inequalities, as well as the downward operator, a key tool for applying this principle.

\section{Deriving Inequalities with the Downward Operator}
\label{sec:downward-operator}

In this section, we explain a common proof strategy in flag algebra. It allows one to prove nontrivial inequalities such as \Cref{eqn:mantel-key-inequality}. The main tool is the \emph{downward operator}, which transfers inequalities proved in a labelled setting back to the original (unlabelled) setting.

At a high level, the strategy proceeds by introducing a family of \emph{labelled} variants of flag algebra and, for each variant, a sound mechanism for transferring valid inequalities in that variant to valid inequalities in the original flag algebra. Each variant is chosen so that certain inequalities become easier to prove. One then proves such ``easy'' inequalities in an appropriate variant and transfers them to the unlabelled setting. In the derivation of \Cref{eqn:mantel-key-inequality}, for example, we use the variant in which one vertex is labelled.

To make this precise, we need some additional definitions.
A \emph{type} $\tau$ is a graph on the vertex set $[k]$ for some $k \in \mathbb{N}$; we call $k$ the \emph{size} of $\tau$.
A \emph{$\tau$-labelled graph} is a pair $G^\tau = (G, \theta)$ consisting of a graph $G$ and an injective map $\theta : [k] \to V(G)$ such that $\theta$ is an \emph{embedding} of $\tau$ into $G$, i.e., for all $i,j \in [k]$,
\[
\{i,j\} \in E(\tau) \quad\text{if and only if}\quad \{\theta(i), \theta(j)\} \in E(G).
\]
We call $\tau$ the \emph{type} of $G^\tau$, and write $\lvert G^\tau \rvert$ for the underlying unlabelled graph $G$.
We also write $V(G^\tau)$, $E(G^\tau)$, $v(G^\tau)$, and $e(G^\tau)$ for $V(G)$, $E(G)$, $v(G)$, and $e(G)$, respectively.
For a fixed type $\tau$, we denote the set of all $\tau$-labelled graphs by $\mathcal{G}^\tau$.
Note that $\mathcal{G}^\tau$ is countably infinite, since every graph considered in this survey article has a finite vertex set contained in $\mathbb{N}$.

We now repeat the concepts from the previous sections for $\tau$-labelled graphs.
For $\tau$-labelled graphs $H^\tau = (H,\iota)$ and $G^\tau = (G,\theta)$, a function $h : V(H) \to V(G)$ is a \emph{$\tau$-homomorphism} if it is a homomorphism from $H$ to $G$ and satisfies $h(\iota(i)) = \theta(i)$ for all $i \in [k]$.
We say that $H^\tau$ and $G^\tau$ are \emph{$\tau$-isomorphic}, denoted $H^\tau \simeq_\tau G^\tau$, if there exists a bijective $\tau$-homomorphism from $H^\tau$ to $G^\tau$ whose inverse is also a $\tau$-homomorphism from $G^\tau$ to $H^\tau$.

For $\tau$-labelled graphs $H^\tau = (H,\iota)$ and $G^\tau = (G,\theta)$, the \emph{induced $\tau$-labelled subgraph density} of $H^\tau$ in $G^\tau$ (or simply the \emph{$H^\tau$-density} in $G^\tau$) is defined by
\[
p_\tau(H^\tau,G^\tau) \defeq \mathbb{P}\Big[\big(G\big[\theta([k]) \cup \mathbf{U}\big],\,\theta\big) \simeq_\tau H^\tau\Big],
\]
where $\mathbf{U}$ is a uniformly random subset of $V(G) \setminus \theta([k])$ of size $v(H^\tau) - k$.
A sequence of $\tau$-labelled graphs $(G^\tau_n)_{n \in \mathbb{N}}$ is \emph{increasing} if $v(G^\tau_n) < v(G^\tau_{n+1})$.
It \emph{converges} if it is increasing and $\lim_{n \to \infty} p_\tau(H^\tau, G^\tau_n)$ exists for every $\tau$-labelled graph $H^\tau \in \mathcal{G}^\tau$.
A function $\phi : \mathcal{G}^\tau \to [0,1]$ is a \emph{$\tau$-labelled limiting graph} if there exists a convergent sequence $(G^\tau_n)_{n \in \mathbb{N}}$ such that
$\phi(H^\tau) = \lim_{n \to \infty} p_\tau(H^\tau, G^\tau_n)$ for every $H^\tau \in \mathcal{G}^\tau$.
As in the unlabelled case, $\phi$ can be viewed (informally) as a random infinite graph; however, now $k$ distinguished vertices in this infinite graph are labelled $1,\ldots,k$, and these labels determine an embedding of $\tau$ into the infinite graph.
We write $\Phi^\tau$ for the set of all $\tau$-labelled limiting graphs.

The $\tau$-labelled variant of flag algebra is defined analogously to the original one, replacing graphs, induced subgraph densities, convergent graph sequences, and limiting graphs by $\tau$-labelled graphs, induced $\tau$-labelled subgraph densities, convergent sequences of $\tau$-labelled graphs, and $\tau$-labelled limiting graphs, respectively.
Its syntax consists of density expressions and assertions built from $\tau$-labelled graphs and interpreted over $\Phi^\tau$.
For convenience, we recall the syntax:
\begin{align*}
\text{Density Expressions}\quad E^\tau & ::= H^\tau \,\mid\, r \cdot E^\tau \,\mid\, 0^\tau \,\mid\, E^\tau + E^\tau \,\mid\, 1^\tau \,\mid\, E^\tau \cdot E^\tau,
\\
\text{Assertions}\quad A^\tau & ::= \lfalse \,\mid\, \ltrue \,\mid\, E^\tau \geq E^\tau \,\mid\, \neg A^\tau \,\mid\, A^\tau \lor A^\tau,
\end{align*}
where $H^\tau$ ranges over $\tau$-labelled graphs and $r$ ranges over real numbers.
We write $\Expr^\tau$ for the set of density expressions and $\Assn^\tau$ for the set of assertions in this variant.

All of \Cref{lemma:basic-algebra,lemma:isomorphism,lemma:non-negativity,lemma:sum-to-one,lemma:decomposition,lemma:eliminate-multiplication} also hold in the $\tau$-labelled variant, with the obvious modifications (replacing graphs by $\tau$-labelled graphs and induced subgraph densities by induced $\tau$-labelled subgraph densities).
We record one immediate consequence of \Cref{lemma:basic-algebra} and the labelled version of \Cref{lemma:eliminate-multiplication} for later use.
\begin{cor}
  \label{cor:basic-algebra-tau:sos}
  Let $\tau$ be a type of size $k$. Then, for all expressions $E_1^\tau,\ldots,E_m^\tau$ in $\Expr^\tau$,
  \[
    \Big(\sum_{i \in [m]} E_i^\tau \cdot E_i^\tau\Big) \geq 0^\tau
  \]
  is a valid assertion in the $\tau$-labelled variant of flag algebra.
\end{cor}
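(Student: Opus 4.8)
The plan is to reduce the statement to a single well-chosen instance, namely $m = 1$, and then lift it to general $m$ using the ordered-algebra rules from \Cref{lemma:basic-algebra}. For a single expression $E^\tau \in \Expr^\tau$, I want to show $E^\tau \cdot E^\tau \geq 0^\tau$ is valid. The natural route is to unfold the semantics: fix an arbitrary $\tau$-labelled limiting graph $\phi \in \Phi^\tau$ and observe that $\db{E^\tau \cdot E^\tau}\phi = \db{E^\tau}\phi \cdot \db{E^\tau}\phi = (\db{E^\tau}\phi)^2 \geq 0$, using the interpretation of multiplication in density expressions together with the fact that the square of any real number is nonnegative. Here the crucial point is that $\phi$ interprets a density expression as an \emph{actual real number}, so the nonnegativity of squares in $\mathbb{R}$ does all the work; we do not even need \Cref{lemma:eliminate-multiplication} for this direction, though it is cited as an available fact and could be used to first rewrite $E^\tau \cdot E^\tau$ as a linear combination $\sum_H r_H \cdot H$ of $\tau$-labelled graphs with nonnegative coefficients, after which \Cref{lemma:non-negativity} (labelled version) gives $H \geq 0^\tau$ and \Cref{lemma:basic-algebra} assembles the sum.

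**Next I would** handle the general sum. Having established $E_i^\tau \cdot E_i^\tau \geq 0^\tau$ for each $i \in [m]$, I invoke the ordered-algebra rules of \Cref{lemma:basic-algebra}: a finite sum of density expressions each of which is $\geq 0^\tau$ is itself $\geq 0^\tau$. Concretely, this is an iterated application of the monotonicity rule ``$x \geq 0 \land y \geq 0 \implies x + y \geq 0$'', which is among the valid reasoning principles guaranteed for the $\tau$-labelled variant. An induction on $m$, with base case $m = 0$ (the empty sum is $0^\tau$, and $0^\tau \geq 0^\tau$ holds reflexively) or $m = 1$ (the single-square case above), closes the argument.

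**The main obstacle**, such as it is, is largely bookkeeping rather than mathematics: one must be careful that the rules being invoked are the \emph{labelled} analogues, and that ``the rules of ordered commutative algebras'' in \Cref{lemma:basic-algebra} genuinely include the additivity-of-nonnegativity and reflexivity-of-$\geq$ facts in the form needed. Since \Cref{lemma:basic-algebra} asserts exactly that quantifier-free ordered-field reasoning is sound over $\Phi$ (and likewise over $\Phi^\tau$), and the implication ``$\bigwedge_i x_i \geq 0 \implies \sum_i x_i \geq 0$'' together with ``$x \cdot x \geq 0$'' are valid first-order consequences of the ordered-ring axioms, both ingredients are covered. So the proof is short: state the single-square fact via the semantics (or via the labelled \Cref{lemma:eliminate-multiplication,lemma:non-negativity}), then fold the sum together with the labelled \Cref{lemma:basic-algebra}.
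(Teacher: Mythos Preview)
Your main argument is correct and matches the paper's treatment: the paper records this corollary as an immediate consequence of the ($\tau$-labelled analogue of) \Cref{lemma:basic-algebra}, and your semantic unfolding $\db{E^\tau\cdot E^\tau}\phi=(\db{E^\tau}\phi)^2\ge 0$ followed by additivity is exactly what that lemma encodes.

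One caution about your parenthetical alternative: it is \emph{not} true in general that rewriting $E^\tau\cdot E^\tau$ via \Cref{lemma:eliminate-multiplication-tau} yields a linear combination of $\tau$-labelled graphs with nonnegative coefficients. That lemma only gives nonnegative coefficients for a product of two \emph{graphs} $H_1^\tau\cdot H_2^\tau$; for a general $E^\tau$ (say $E^\tau=H_1^\tau-H_2^\tau$) the cross term $-2\,H_1^\tau H_2^\tau$ contributes negatively, so the assembled linear combination can have negative coefficients and you cannot conclude by \Cref{lemma:non-negativity} alone. This does not affect your primary proof, but the alternative route as stated would fail.
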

\begin{lem}
  \label{lemma:eliminate-multiplication-tau}
  Let $\tau$ be a type of size $k$. Then the equation
  \[
    1^\tau = (\tau,\mathrm{id}_{[k]})
  \]
  is valid in the $\tau$-labelled variant of flag algebra. Also, let $V \subseteq \mathbb{N}$ be finite with $|V| \geq k$, and let $\mathcal{H}^\tau_V$ be a set of $\tau$-labelled graphs on $V$ such that every $\tau$-labelled graph on $V$ is $\tau$-isomorphic to exactly one element of $\mathcal{H}^\tau_V$. Then, for every pair of $\tau$-labelled graphs $H_1^\tau$ and $H_2^\tau$ with $v(H_1^\tau) + v(H_2^\tau) - k \leq |V|$, the following equation is valid:
  \[
    H_1^\tau \cdot H_2^\tau = \sum_{(H,\theta) \in \mathcal{H}^\tau_V} r_{(H,\theta)} \cdot (H,\theta),
  \]
  where
  \[
    r_{(H,\theta)} \defeq \mathbb{P}\big[H\big[\theta([k]) \cup \mathbf{U}_1\big] \simeq_\tau H_1^\tau\ \, \mathrm{and}\ \, H\big[\theta([k]) \cup \mathbf{U}_2\big] \simeq_\tau H_2^\tau\big],
  \]
  with $\mathbf{U}_1$ and $\mathbf{U}_2$ being uniformly random disjoint subsets of $V(H) \setminus \theta([k])$ of sizes $v(H_1^\tau) - k$ and $v(H_2^\tau) - k$, respectively.
\end{lem}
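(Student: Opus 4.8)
The plan is to mirror the proof of \Cref{lemma:eliminate-multiplication}, working with the $\tau$-labelled versions of the auxiliary results from~\cite{Razborov2007} (the labelled analogues of Lemmas 2.2 and 2.3 there). First I would fix an arbitrary $\tau$-labelled limiting graph $\phi \in \Phi^\tau$ and a convergent sequence $(G^\tau_n)_{n \in \mathbb{N}}$ of $\tau$-labelled graphs witnessing it, so that $\phi(F^\tau) = \lim_{n\to\infty} p_\tau(F^\tau, G^\tau_n)$ for every $F^\tau \in \mathcal{G}^\tau$. For the first claim, the point is that $p_\tau((\tau,\mathrm{id}_{[k]}),\, G^\tau_n) = 1$ for every $n$, since when one samples zero vertices outside $\theta([k])$, the induced $\tau$-labelled subgraph on $\theta([k])$ is $\tau$-isomorphic to $(\tau,\mathrm{id}_{[k]})$ precisely because $\theta$ is an embedding of $\tau$; hence $\db{1^\tau}\phi = 1 = \phi((\tau,\mathrm{id}_{[k]})) = \db{(\tau,\mathrm{id}_{[k]})}\phi$, and validity follows from arbitrariness of $\phi$.

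For the second claim I would pick $n_0$ with $v(G^\tau_{n_0}) \geq |V| + k - k = |V|$ (more precisely, large enough that the sampling below is well-defined, i.e.\ $v(G^\tau_n) - k \geq \max(v(H_1^\tau), v(H_2^\tau)) - k$ and $v(G^\tau_n) - k \geq |V| - k$), and invoke the $\tau$-labelled counting lemmas to obtain, for all $n \geq n_0$, an estimate of the form
\[
\left| p_\tau(H_1^\tau, G^\tau_n)\cdot p_\tau(H_2^\tau, G^\tau_n) - \sum_{(H,\theta)\in\mathcal{H}^\tau_V} r_{(H,\theta)}\cdot p_\tau((H,\theta), G^\tau_n) \right| \leq \frac{(v(H_1^\tau)+v(H_2^\tau))^C}{v(G^\tau_n)}
\]
for some constant $C$. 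Since $v(G^\tau_n)\to\infty$, the right-hand side tends to $0$, so the two sequences have the same limit. Passing to limits on both sides, using that $\phi$ is multiplicative on products of expressions (the labelled analogue of the semantic clause $\db{E_1^\tau\cdot E_2^\tau}\phi = \db{E_1^\tau}\phi\cdot\db{E_2^\tau}\phi$) and that finite sums commute with limits, gives $\db{H_1^\tau\cdot H_2^\tau}\phi = \sum_{(H,\theta)\in\mathcal{H}^\tau_V} r_{(H,\theta)}\cdot\db{(H,\theta)}\phi$, and arbitrariness of $\phi$ finishes the proof.

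The main obstacle, such as it is, is not in the limiting argument (which is a routine transcription of the unlabelled case) but in the combinatorial bookkeeping behind the $\tau$-labelled counting lemma: one must check that the identity $p_\tau(H_1^\tau,G^\tau)\cdot p_\tau(H_2^\tau,G^\tau) = \sum r_{(H,\theta)} p_\tau((H,\theta),G^\tau) + O(1/v(G^\tau))$ holds with the $r_{(H,\theta)}$ defined via \emph{disjoint} uniform samples $\mathbf{U}_1,\mathbf{U}_2$ drawn from $V(H)\setminus\theta([k])$ rather than from all of $V(H)$, and that the labelled vertices $\theta([k])$ are shared rather than double-counted. This is exactly the content of the labelled versions of Lemmas 2.2 and 2.3 of~\cite{Razborov2007}, which I would cite rather than reprove; the error term arises, as in the unlabelled case, from the small probability that two independent uniform samples of the unlabelled part overlap. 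Everything else is the same limit-interchange template used in the proofs of \Cref{lemma:decomposition,lemma:eliminate-multiplication}.
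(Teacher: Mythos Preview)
Your proposal is correct and follows exactly the approach the paper takes: the paper does not give a separate proof of this lemma but states that all of \Cref{lemma:basic-algebra,lemma:isomorphism,lemma:non-negativity,lemma:sum-to-one,lemma:decomposition,lemma:eliminate-multiplication} carry over to the $\tau$-labelled variant ``with the obvious modifications,'' and your write-up is precisely that obvious modification of the proof of \Cref{lemma:eliminate-multiplication}, including the appeal to the labelled versions of Lemmas~2.2 and~2.3 of~\cite{Razborov2007}.
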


The transfer mechanism from the $\tau$-labelled variant to the original flag algebra is the \emph{downward operator}. We define it by a general recipe that relates the variant and the original flag algebras via what we call an \emph{adjoint pair of functions}. We explain this recipe next and then instantiate it to obtain the downward operator.

Fix a type $\tau$, and consider functions $\gamma : \mathcal{G} \to \FinMea(\mathcal{G}^\tau)$ and $\alpha : \mathcal{G}^\tau \to \FinMea(\mathcal{G})$.
Here we regard $\mathcal{G}$ and $\mathcal{G}^\tau$ as measurable spaces equipped with the discrete $\sigma$-algebras, and write $\FinMea(\mathcal{G})$ and $\FinMea(\mathcal{G}^\tau)$ for the sets of finite measures on $\mathcal{G}$ and $\mathcal{G}^\tau$, respectively.
Intuitively, for $G \in \mathcal{G}$, the measure $\gamma(G)$ defines an unnormalised probability distribution over $\mathcal{G}^\tau$, and for $H^\tau \in \mathcal{G}^\tau$, the measure $\alpha(H^\tau)$ defines an unnormalised probability distribution over $\mathcal{G}$.
We say that $(\alpha,\gamma)$ is an \emph{adjoint pair} if, for all $H^\tau \in \mathcal{G}^\tau$ and $G \in \mathcal{G}$,
\begin{equation}
\label{eq:adjoint-pair}
p(\alpha(H^\tau), G) = p_\tau(H^\tau, \gamma(G)),
\end{equation}
where
\begin{align*}
  p(\alpha(H^\tau), G) &\defeq \int_{\mathcal{G}} p(H, G) \, \Big(\alpha(H^\tau)(dH)\Big),
  \\
  p_\tau(H^\tau, \gamma(G)) &\defeq \int_{\mathcal{G}^\tau} p_\tau(H^\tau, G^\tau) \, \Big(\gamma(G)(dG^\tau)\Big).
\end{align*}

Equation~\eqref{eq:adjoint-pair} is reminiscent of a Galois connection or a categorical adjunction.\footnote{Readers familiar with functional analysis may also recognise the similarity with adjoint linear operators.}
In automated verification, Galois connections are used to relate different interpretations of a programming language or a logical system, and in programming languages, categorical adjunctions are used to relate semantic settings.
Similarly, we use adjoint pairs to relate the original flag algebra and its $\tau$-labelled variant.

Let $(\alpha,\gamma)$ be an adjoint pair for a type $\tau$.
We say that $\alpha$ has \emph{finite support} if, for every $H^\tau \in \mathcal{G}^\tau$, the support
\[
\support(\alpha(H^\tau)) \defeq \{ H \,:\ \alpha(H^\tau)(\{H\}) > 0 \}
\]
is finite. This condition allows us to define the following map $\alpha^\dagger$ from multiplication-free and $1^\tau$-free expressions in $\Expr^\tau$ to density expressions in the original flag algebra:
\begin{align*}
  \alpha^\dagger(H^\tau) & \defeq \sum_{H \in \support(\alpha(H^\tau))} \alpha(H^\tau)(\{H\}) \cdot H,
  &
  \alpha^\dagger(r \cdot E^\tau) & \defeq r \cdot \alpha^\dagger(E^\tau),
  \\
  \alpha^\dagger(0^\tau) & \defeq 0,
  &
  \alpha^\dagger(E_1^\tau + E_2^\tau) & \defeq \alpha^\dagger(E_1^\tau) + \alpha^\dagger(E_2^\tau).
\end{align*}
We also say that $\gamma$ is \emph{extendible to limiting graphs} if there exists a function $\gamma_\infty : \Phi \to \FinMea(\Phi^\tau)$ such that, for every convergent sequence of graphs $(G_n)_{n \in \mathbb{N}}$ and every limiting graph $\phi \in \Phi$,
\[
\Bigl(\forall H \in \mathcal{G},\, \lim_{n \to \infty} p(H, G_n) = \phi(H)\Bigr)
\implies
\Bigl(\forall H^\tau \in \mathcal{G}^\tau,\, \lim_{n \to \infty} p_\tau(H^\tau, \gamma(G_n)) = \int_{\Phi^\tau} \phi^\tau(H^\tau) \,\bigl(\gamma_\infty(\phi)(d\phi^\tau)\bigr)\Bigr).
\]
Here, $\Phi^\tau$ is equipped with the $\sigma$-algebra induced by the product topology on $[0,1]^{\mathcal{G}^\tau}$, making it a measurable space, and $\gamma_\infty(\phi)$ is a finite measure on $\Phi^\tau$.

The next lemma states that if $\alpha$ has finite support and $\gamma$ is extendible to limiting graphs, then the adjoint pair $(\alpha,\gamma)$ yields a sound transfer principle.
\begin{lem}
  \label{lemma:adjoint-transfer}
  Let $(\alpha,\gamma)$ be an adjoint pair for a type $\tau$.
  Assume that $\alpha$ has finite support, and let $\alpha^\dagger$ be the induced extension of $\alpha$ to multiplication-free expressions that do not contain $1^\tau$.
  Also assume that $\gamma$ is extendible to limiting graphs.
  Then, for every valid inequality $E^\tau \geq 0^\tau$ in the $\tau$-labelled variant, where $E^\tau$ is multiplication-free and $1^\tau$-free, the inequality $\alpha^\dagger(E^\tau) \geq 0$ is valid in the original flag algebra.
\end{lem}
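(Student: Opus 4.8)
The plan is to fix an arbitrary limiting graph $\phi \in \Phi$ and show that $\db{\alpha^\dagger(E^\tau)}\phi \geq 0$; since $\phi$ is arbitrary, this yields validity of $\alpha^\dagger(E^\tau) \geq 0$ in the original flag algebra. The heart of the argument is an integral identity expressing $\db{\alpha^\dagger(E^\tau)}\phi$ in terms of the $\tau$-labelled semantics of $E^\tau$ integrated against the measure $\gamma_\infty(\phi)$ on $\Phi^\tau$. Concretely, I would prove that for every multiplication-free, $1^\tau$-free $E^\tau \in \Expr^\tau$ and every $\phi \in \Phi$,
\[
\db{\alpha^\dagger(E^\tau)}\phi \;=\; \int_{\Phi^\tau} \db{E^\tau}\phi^\tau \,\bigl(\gamma_\infty(\phi)(d\phi^\tau)\bigr),
\]
where on the right $\db{\cdot}$ is interpreted in the $\tau$-labelled variant (so $\db{H^\tau}\phi^\tau = \phi^\tau(H^\tau)$ for $\phi^\tau \in \Phi^\tau$, and the other clauses are the obvious analogues). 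Granting this identity, validity of $E^\tau \geq 0^\tau$ means $\db{E^\tau}\phi^\tau \geq 0$ for all $\phi^\tau \in \Phi^\tau$, so the integrand is nonnegative; since $\gamma_\infty(\phi)$ is a nonnegative finite measure, the integral, and hence $\db{\alpha^\dagger(E^\tau)}\phi$, is nonnegative.

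It remains to establish the identity, which I would prove by induction on the structure of $E^\tau$. Fix $\phi$ and a convergent sequence $(G_n)_{n\in\mathbb{N}}$ with $\lim_{n\to\infty} p(H,G_n) = \phi(H)$ for all $H \in \mathcal{G}$. For the base case $E^\tau = H^\tau$, I would unfold the definitions and use finiteness of $\support(\alpha(H^\tau))$ to write
\begin{align*}
\db{\alpha^\dagger(H^\tau)}\phi
&= \sum_{H \in \support(\alpha(H^\tau))} \alpha(H^\tau)(\{H\})\,\phi(H) \\
&= \lim_{n\to\infty} \sum_{H \in \support(\alpha(H^\tau))} \alpha(H^\tau)(\{H\})\,p(H,G_n)
= \lim_{n\to\infty} p(\alpha(H^\tau),G_n),
\end{align*}
where the last equality is the observation that, for a finitely supported discrete measure, the integral defining $p(\alpha(H^\tau),\cdot)$ reduces to the displayed finite sum. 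By the adjoint-pair equation~\eqref{eq:adjoint-pair}, $p(\alpha(H^\tau),G_n) = p_\tau(H^\tau,\gamma(G_n))$, and by extendibility of $\gamma$ to limiting graphs, $\lim_{n\to\infty} p_\tau(H^\tau,\gamma(G_n)) = \int_{\Phi^\tau} \phi^\tau(H^\tau)\,(\gamma_\infty(\phi)(d\phi^\tau))$, which is the right-hand side of the identity. The cases $0^\tau$, $r \cdot E^\tau$, and $E_1^\tau + E_2^\tau$ then follow directly from the defining clauses of $\alpha^\dagger$, the matching clauses of the two semantics, and linearity of the integral.

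To justify the use of linearity I would first record a routine measurability and integrability observation: each function $\phi^\tau \mapsto \db{E^\tau}\phi^\tau$ on $\Phi^\tau$ is bounded and measurable with respect to the $\sigma$-algebra induced by the product topology on $[0,1]^{\mathcal{G}^\tau}$. For $E^\tau = H^\tau$ it is the coordinate projection $\phi^\tau \mapsto \phi^\tau(H^\tau)$, which takes values in $[0,1]$ and is measurable by definition of the product $\sigma$-algebra; the constructors $0^\tau$, scalar multiplication, and addition preserve boundedness and measurability. Since $\gamma_\infty(\phi)$ is a finite measure, every integral appearing above is well-defined and finite, and the integral is linear on these integrands, which validates the inductive step.

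I expect the only real subtlety to be the base case: lining up the three successive rewrites---``the semantic value equals a finite sum, which equals the integral defining $p(\alpha(H^\tau),\cdot)$'', then ``apply the adjoint-pair equation~\eqref{eq:adjoint-pair} pointwise along the sequence $(G_n)$'', then ``pass to the limit using extendibility of $\gamma$''---so that the finitely supported sums, the two integral notations, and the limits are matched correctly. The inductive step and the measurability check are routine. Notably, no analogue of the deeper flag-algebra lemmas such as \Cref{lemma:eliminate-multiplication} is needed here, precisely because $E^\tau$ is restricted to be multiplication-free and $1^\tau$-free, which is exactly the domain on which $\alpha^\dagger$ is defined.
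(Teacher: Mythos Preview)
Your proposal is correct and follows essentially the same approach as the paper: fix $\phi$, prove the integral identity $\db{\alpha^\dagger(E^\tau)}\phi = \int_{\Phi^\tau}\db{E^\tau}\phi^\tau\,(\gamma_\infty(\phi)(d\phi^\tau))$ by structural induction on $E^\tau$, handle the base case $H^\tau$ via the finite-support sum, the adjoint-pair equation, and extendibility of $\gamma$, and then conclude from nonnegativity of the integrand. Your added measurability/integrability remark is a minor elaboration the paper leaves implicit, but otherwise the structure and key steps are the same.
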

\begin{proof}
  Let $\gamma_\infty : \Phi \to \FinMea(\Phi^\tau)$ be a witness of the extendibility 
  of $\gamma$ to limiting graphs.
  Consider an arbitrary limiting graph $\phi \in \Phi$. We need to show that $\db{\alpha^\dagger(E^\tau)}\phi \geq 0$. 
  Using structural induction on $E^\tau$, we will show that 
  \begin{equation}
    \label{eq:lem:adjoint-transfer0}
    \db{\alpha^\dagger(E^\tau)}\phi = \int_{\Phi^\tau} \db{E^\tau}\phi^\tau \,\Big(\gamma_\infty(\phi)(d\phi^\tau)\Big).
  \end{equation}
  The desired conclusion then follows from this equation because $E^\tau \geq 0^\tau$ is valid in the $\tau$-labelled variant
  and so the integrand $\db{E^\tau}\phi^\tau$ is non-negative for every $\phi^\tau \in \Phi^\tau$.

  Assume that $E^\tau$ is a $\tau$-labelled graph $H^\tau$. By the definition of a limiting graph, there exists a convergent sequence of graphs $(G_n)_n$ such that 
  \begin{equation}
    \label{eq:lem:adjoint-transfer1}
    \phi(H) = \lim_{n \to \infty} p(H, G_n)  \quad\text{for all } H \in \mathcal{G}.
  \end{equation}
  Using this, we derive \Cref{eq:lem:adjoint-transfer0} as follows:
  \begin{align*}
    \db{\alpha^\dagger(H^\tau)}\phi 
    & {} = \db{\sum_{H \in \support(\alpha(H^\tau))} \alpha(H^\tau)(\{H\}) \cdot H}\phi
    && \text{by the definition of } \alpha^\dagger
    \\
    & {} = \sum_{H \in \support(\alpha(H^\tau))} \alpha(H^\tau)(\{H\}) \cdot \phi(H)
    && \text{by the definition of } \db{-}
    \\
    & {} = \sum_{H \in \support(\alpha(H^\tau))} \alpha(H^\tau)(\{H\}) \cdot \lim_{n \to \infty} p(H, G_n)
    && \text{by \Cref{eq:lem:adjoint-transfer1}}
    \\
    & {} = \lim_{n \to \infty} \sum_{H \in \support(\alpha(H^\tau))} \alpha(H^\tau)(\{H\}) \cdot p(H, G_n)
    && \text{by the finiteness of } \support(\alpha(H^\tau))
    \\
    & {} = \lim_{n \to \infty} p(\alpha(H^\tau), G_n)
    && \text{by the definition of } p(\alpha(H^\tau), G_n)
    \\
    & {} = \lim_{n \to \infty} p_\tau(H^\tau, \gamma(G_n))
    && \text{by the adjointness of } (\alpha,\gamma)
    \\
    & {} = \int_{\Phi^\tau} \phi^\tau(H^\tau) \, \Big(\gamma_\infty(\phi)(d\phi^\tau)\Big)
    && \text{by the extendibility of } \gamma
    \\
    & {} = \int_{\Phi^\tau} \db{H^\tau}\phi^\tau \, \Big(\gamma_\infty(\phi)(d\phi^\tau)\Big)
    && \text{by the definition of } \db{-}.
  \end{align*}

  Next consider the case that $E^\tau = 0^\tau$. Then, 
  \begin{align*}
    \db{\alpha^\dagger(0^\tau)}\phi 
    = \db{0}\phi
    = 0 
    = \int_{\Phi^\tau} 0 \, \Big(\gamma_\infty(\phi)(d\phi^\tau)\Big)
    = \int_{\Phi^\tau} \db{0^\tau}\phi^\tau \, \Big(\gamma_\infty(\phi)(d\phi^\tau)\Big).
  \end{align*}

  The remaining cases can be proved by the induction hypothesis and the linearity of integration. We first derive the target equation for $E^\tau = r \cdot F^\tau$:
  \begin{align*}
    \db{\alpha^\dagger(r \cdot F^\tau)}\phi 
    & = \db{r \cdot \alpha^\dagger(F^\tau)}\phi
    && \text{by the definition of } \alpha^\dagger
    \\
    & = r \cdot \left(\db{\alpha^\dagger(F^\tau)}\phi\right)
    && \text{by the definition of } \db{-}
    \\
    & = r \cdot \int_{\Phi^\tau} \db{F^\tau}\phi^\tau \, \Big(\gamma_\infty(\phi)(d\phi^\tau)\Big)
    && \text{by the induction hypothesis}
    \\
    & = \int_{\Phi^\tau} r \cdot \left(\db{F^\tau}\phi^\tau\right) \, \Big(\gamma_\infty(\phi)(d\phi^\tau)\Big)
    && \text{by the linearity of integration}
    \\
    & = \int_{\Phi^\tau} \db{r \cdot F^\tau}\phi^\tau \, \Big(\gamma_\infty(\phi)(d\phi^\tau)\Big)
    && \text{by the definition of } \db{-}.
  \end{align*}
  Next, we handle the case that $E^\tau = E_1^\tau + E_2^\tau$:
  \begin{align*}
    & \db{\alpha^\dagger(E_1^\tau + E_2^\tau)}\phi 
    \\
    & \quad {} = \db{\alpha^\dagger(E_1^\tau) + \alpha^\dagger(E_2^\tau)}\phi
    && \text{by the definition of } \alpha^\dagger
    \\
    & \quad {} = \db{\alpha^\dagger(E_1^\tau)}\phi + \db{\alpha^\dagger(E_2^\tau)}\phi
    && \text{by the definition of } \db{-}
    \\
    & \quad {} = \int_{\Phi^\tau} \db{E_1^\tau}\phi^\tau \, \Big(\gamma_\infty(\phi)(d\phi^\tau)\Big)
    + \int_{\Phi^\tau} \db{E_2^\tau}\phi^\tau \, \Big(\gamma_\infty(\phi)(d\phi^\tau)\Big)
    && \text{by the induction hypothesis}
    \\
    & \quad {} = \int_{\Phi^\tau} \left(\db{E_1^\tau}\phi^\tau + \db{E_2^\tau}\phi^\tau\right) \, \Big(\gamma_\infty(\phi)(d\phi^\tau)\Big)
    && \text{by the linearity of integration}
    \\
    & \quad {} = \int_{\Phi^\tau} \db{E_1^\tau + E_2^\tau}\phi^\tau \, \Big(\gamma_\infty(\phi)(d\phi^\tau)\Big)
    && \text{by the definition of } \db{-}.     
  \end{align*}
\end{proof}

We are ready to define the downward operator for each type $\tau$. Given an expression $E^\tau$ in $\Expr^\tau$, we first eliminate all multiplications and occurrences of the constant $1^\tau$ in $E^\tau$ using \Cref{lemma:eliminate-multiplication-tau}, obtaining a multiplication-free, $1$-free expression $F^\tau$ that is equal to $E^\tau$ in the $\tau$-labelled variant of flag algebra. Although this elimination step is underspecified, it is typically performed from the innermost subexpression outward. The downward operator then maps $F^\tau$ to $\alpha_d^\dagger(F^\tau)$, where $\alpha_d$ is part of a specific adjoint pair $(\alpha_d,\gamma_d)$ for the type $\tau$ defined below. This adjoint pair satisfies the conditions in \Cref{lemma:adjoint-transfer}. Consequently, if $E^\tau \geq 0^\tau$ is a valid inequality in the $\tau$-labelled variant, then $\alpha_d^\dagger(F^\tau) \geq 0$ is valid in the original flag algebra.

To define the adjoint pair $(\alpha_d,\gamma_d)$ mentioned above, consider a $\tau$-labelled graph $H^\tau \in \mathcal{G}^\tau$ and an unlabelled graph $G \in \mathcal{G}$. For each $G^\tau \in \mathcal{G}^\tau$, let $\boldsymbol{\theta}_{G^\tau}$ be the uniformly random injection from $[k]$ to $V(G^\tau)$, and also let
\[
  q_{G^\tau} \defeq \mathbb{P}\Big[\boldsymbol{\theta}_{G^\tau} \text{ is an embedding from $\tau$ to $|G^\tau|$ and $(|G^\tau|,\boldsymbol{\theta}_{G^\tau}) \simeq_\tau G^\tau$}\Big].
\]
Note that $(\tau,\mathrm{id}_{[k]})$ is a $\tau$-labelled graph whose underlying graph is $\tau$ itself, 
and that $q_{(\tau,\mathrm{id}_{[k]})} = |\mathrm{Aut}(\tau)| / v(\tau)!$ where $\mathrm{Aut}(\tau)$ is the set of all automorphisms on $\tau$. Using this notation,
we define finite measures $\alpha_d(H^\tau) \in \FinMea(\mathcal{G})$ and $\gamma_d(G) \in \FinMea(\mathcal{G}^\tau)$ as follows:
\begin{align*}
  \alpha_d(H^\tau)(\{H\}) & \defeq 
  \begin{cases}
  q_{H^\tau} & \text{if } |H^\tau| = H
  \\
  0 & \text{otherwise},
  \end{cases}
  \\
  \gamma_d(G)(\{G^\tau\}) & \defeq 
  \begin{cases}
  \big(q_{(\tau,\mathrm{id})} \cdot p(\tau,G)\big) \div \big|\{G^\tau_0 \in \mathcal{G}^\tau \,:\ |G^\tau_0| = G\}\big| & \text{if } |G^\tau| = G
  \\
  0 & \text{otherwise}.
  \end{cases}
\end{align*}
Among the conditions in \Cref{lemma:adjoint-transfer}, it is straightforward to see that $\alpha_d(H^\tau)$ has finite support for every $H^\tau \in \mathcal{G}^\tau$. Also, it can be shown that $\gamma_d$ is extendible to limiting graphs, although the proof is involved. See the proof of Theorem 3.5 in \cite{Razborov2007} for details. Finally, the following lemma establishes the adjointness of $(\alpha_d,\gamma_d)$.
\begin{lem}
  \label{lemma:downward-adjoint}
  The pair $(\alpha_d,\gamma_d)$ defined above is an adjoint pair for the type $\tau$.
\end{lem}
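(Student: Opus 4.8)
The plan is to unfold both sides of the adjointness identity~\eqref{eq:adjoint-pair} for the pair $(\alpha_d,\gamma_d)$ into explicit finite sums and then to show that each side equals one and the same probability, namely $\mathbb{P}\big[(G[\boldsymbol{\phi}([m])],\, \boldsymbol{\phi}|_{[k]}) \simeq_\tau H^\tau\big]$, where $m \defeq v(H^\tau)$, $k$ is the size of $\tau$, and $\boldsymbol{\phi}$ is a uniformly random injection from $[m]$ to $V(G)$. We may assume $v(G) \ge v(H^\tau)$; otherwise both sides vanish, since $\alpha_d(H^\tau)$ is concentrated on the single graph $\lvert H^\tau\rvert$ with $v(\lvert H^\tau\rvert) = m > v(G)$, so $p(\alpha_d(H^\tau),G) = q_{H^\tau}\,p(\lvert H^\tau\rvert,G) = 0$, while $p_\tau(H^\tau,G^\tau) = 0$ for every $G^\tau$ with $\lvert G^\tau\rvert = G$, so $p_\tau(H^\tau,\gamma_d(G)) = 0$ as well.

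The first step is to simplify the weight $\gamma_d(G)(\{G^\tau\})$. Grouping embeddings of $\tau$ into $G$ by their image shows that $\big\lvert\{G^\tau_0 \in \mathcal{G}^\tau : \lvert G^\tau_0\rvert = G\}\big\rvert = p(\tau,G)\binom{v(G)}{k}\lvert\mathrm{Aut}(\tau)\rvert$, which together with $q_{(\tau,\mathrm{id})} = \lvert\mathrm{Aut}(\tau)\rvert/v(\tau)!$ collapses the weight to $\gamma_d(G)(\{G^\tau\}) = (v(G)-k)!/v(G)!$ whenever $\lvert G^\tau\rvert = G$ (and $0$ otherwise); that is, $\gamma_d(G)$ is the counting measure on $\tau$-embeddings into $G$, normalised by the number of \emph{all} injections $[k]\hookrightarrow V(G)$. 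Substituting this into $p_\tau(H^\tau,\gamma_d(G)) = \sum_{G^\tau}\gamma_d(G)(\{G^\tau\})\,p_\tau(H^\tau,G^\tau)$ and unfolding the definition of $p_\tau$, the right-hand side becomes the expectation, over a uniformly random injection $\boldsymbol{\theta}\colon[k]\hookrightarrow V(G)$ followed by a uniformly random $(m-k)$-subset $\mathbf{U}$ of $V(G)\setminus\boldsymbol{\theta}([k])$, of $\mathbf{1}\big[(G[\boldsymbol{\theta}([k])\cup\mathbf{U}],\boldsymbol{\theta})\simeq_\tau H^\tau\big]$ --- the sum over embeddings rather than over all injections being harmless, because a $\tau$-isomorphism forces its labelling map to be an embedding. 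Merging the two draws into one uniform injection $\boldsymbol{\phi}\colon[m]\hookrightarrow V(G)$ by setting $\boldsymbol{\theta} = \boldsymbol{\phi}|_{[k]}$ and $\mathbf{U} = \boldsymbol{\phi}(\{k+1,\dots,m\})$, and using $\boldsymbol{\theta}([k])\cup\mathbf{U} = \boldsymbol{\phi}([m])$, gives $p_\tau(H^\tau,\gamma_d(G)) = \mathbb{P}\big[(G[\boldsymbol{\phi}([m])],\boldsymbol{\phi}|_{[k]})\simeq_\tau H^\tau\big]$.

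For the left-hand side, $p(\alpha_d(H^\tau),G) = q_{H^\tau}\,p(\lvert H^\tau\rvert,G)$ directly from the definition of $\alpha_d$, and the plan is to reach the same probability by conditioning the experiment on the uniformly random $m$-set $\mathbf{S} \defeq \boldsymbol{\phi}([m])$, given which $\boldsymbol{\phi}$ is a uniform bijection $[m]\to\mathbf{S}$. Writing $H^\tau = (H,\iota)$: if $G[\mathbf{S}]\not\simeq H$ the conditional event is impossible, and if $G[\mathbf{S}]\simeq H$ one fixes any isomorphism $g\colon H\to G[\mathbf{S}]$ and observes that $(G[\mathbf{S}],\boldsymbol{\phi}|_{[k]})\simeq_\tau H^\tau$ holds if and only if $g^{-1}\circ\boldsymbol{\phi}|_{[k]}$ lies in the $\mathrm{Aut}(H)$-orbit of $\iota$ (as $h$ ranges over the isomorphisms $H\to G[\mathbf{S}]$, the composite $g^{-1}\circ h$ ranges over $\mathrm{Aut}(H)$). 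Since $g^{-1}\circ\boldsymbol{\phi}$ is a uniform bijection $[m]\to V(H)$, its restriction $g^{-1}\circ\boldsymbol{\phi}|_{[k]}$ is a uniform injection $[k]\hookrightarrow V(H)$, so this conditional probability equals $\mathbb{P}\big[(H,\boldsymbol{\psi})\simeq_\tau H^\tau\big] = q_{H^\tau}$ for a uniform injection $\boldsymbol{\psi}\colon[k]\hookrightarrow V(H)$, independently of $\mathbf{S}$. Hence the unconditional probability is $\mathbb{P}[G[\mathbf{S}]\simeq H]\cdot q_{H^\tau} = p(\lvert H^\tau\rvert,G)\cdot q_{H^\tau} = p(\alpha_d(H^\tau),G)$, which together with the previous paragraph proves the identity.

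I do not expect a deep obstacle: the argument is a definitional unwinding plus a single symmetrisation. The points demanding the most care are the bookkeeping when passing between the finite measures $\alpha_d(H^\tau),\gamma_d(G)$ and the associated sums and integrals, the orbit-counting that makes the conditional probability in the last step equal to $q_{H^\tau}$ uniformly over $\mathbf{S}$ (which is where the definitions of $q_{H^\tau}$ and of $\simeq_\tau$ are genuinely used), and verifying that the labelling map is automatically an embedding whenever the relevant $\tau$-isomorphism holds. The degenerate size regime $v(G) < v(H^\tau)$, disposed of at the outset, is the only case where the density conventions rather than the probabilistic picture do the work.
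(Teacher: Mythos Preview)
Your argument is correct. Both your proof and the paper's reduce the identity to a single probabilistic/counting statement, but they are organised differently. You first simplify $\gamma_d(G)(\{G^\tau\})$ to the constant $(v(G)-k)!/v(G)!$, which immediately reveals that $p_\tau(H^\tau,\gamma_d(G))$ is the probability of the event $\{(G[\boldsymbol{\phi}([m])],\boldsymbol{\phi}|_{[k]})\simeq_\tau H^\tau\}$ for a single uniform injection $\boldsymbol{\phi}\colon[m]\hookrightarrow V(G)$; you then recover $q_{H^\tau}\,p(|H^\tau|,G)$ from the same probability by conditioning on the image $\boldsymbol{\phi}([m])$ and an $\mathrm{Aut}(H)$-orbit count. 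The paper instead keeps $Z_G$ unexpanded, expresses the two sides as probabilities of events $E$ and $E'$ on two different sample spaces that happen to have the same cardinality $n!/((n-m)!(m-k)!)$, and finishes by building an explicit bijection $(\theta,U)\mapsto(\theta',U')$ using a fixed isomorphism $f_{U_0}$ for each good $m$-set $U_0$. Your conditioning argument is more conceptual and slightly shorter; the paper's bijection is more hands-on and makes the equality of counts completely explicit. Either way, the core content---that the two sides count the same labelled copies of $H^\tau$ inside $G$---is identical.
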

\begin{proof}
  We prove the lemma by substantially expanding the argument in the proof of Lemma 3.11 in \cite{Razborov2007} and adjusting the expanded version to our setting.

  Fix an arbitrary $\tau$-labelled graph $H^\tau \in \mathcal{G}^\tau$ and an arbitrary unlabelled
  graph $G \in \mathcal{G}$. We prove that
  \[
    p(\alpha_d(H^\tau), G) = p_\tau(H^\tau, \gamma_d(G)).
  \]

  We first simplify the left-hand side:
  \begin{align*}
    p(\alpha_d(H^\tau), G)
    & = \int_{\mathcal{G}} p(H, G) \, \big(\alpha_d(H^\tau)(dH)\big)
    = q_{H^\tau} \cdot p(|H^\tau|, G)
    && \text{by the definition of } \alpha_d(H^\tau).
  \end{align*}

  Let
  \[
    Z_G \defeq \big|\{G^\tau \in \mathcal{G}^\tau \,:\ |G^\tau| = G\}\big|.
  \]
  If $Z_G = 0$, then $p(\tau,G)=0$ (there is no embedding of $\tau$ into $G$), hence $\gamma_d(G)$ is
  the zero measure and therefore $p_\tau(H^\tau,\gamma_d(G))=0$.
  Moreover, since $H^\tau$ contains $\tau$ on the labelled vertices, an induced copy of $|H^\tau|$ in
  $G$ would in particular contain an induced copy of $\tau$, contradicting $p(\tau,G)=0$; hence
  $p(|H^\tau|,G)=0$ and so the left-hand side is also $0$.
  Thus, we may assume $Z_G>0$.

  Under this assumption, we simplify the right-hand side:
  \begin{align*}
    p_\tau(H^\tau, \gamma_d(G))
    & = \int_{\mathcal{G}^\tau} p_\tau(H^\tau, G^\tau) \, \big(\gamma_d(G)(dG^\tau)\big)
    \\
    & = \sum_{\substack{G^\tau \in \mathcal{G}^\tau \\ |G^\tau| = G}}
    p_\tau(H^\tau, G^\tau)\cdot \frac{q_{(\tau,\mathrm{id})}\cdot p(\tau,G)}{Z_G}
    && \text{by the definition of } \gamma_d(G)
    \\
    & = \big(q_{(\tau,\mathrm{id})}\cdot p(\tau,G)\big)\cdot
    \sum_{\substack{G^\tau \in \mathcal{G}^\tau \\ |G^\tau| = G}}
    \frac{1}{Z_G}\,p_\tau(H^\tau, G^\tau).
  \end{align*}

  Therefore, it suffices to show that
  \begin{equation}
    \label{eq:downward-adjoint:target-eq}
    q_{H^\tau} \cdot p(|H^\tau|, G)
    =
    \big(q_{(\tau,\mathrm{id})} \cdot p(\tau,G)\big)\cdot
    \sum_{\substack{G^\tau \in \mathcal{G}^\tau \\ |G^\tau| = G}}
    \frac{1}{Z_G}\,p_\tau(H^\tau, G^\tau).
  \end{equation}

  Let $k = v(\tau)$, $m = v(H^\tau)$, and $n = v(G)$. Define $\boldsymbol{\theta}$ as a uniformly random injection from $[k]$ to $V(H^\tau)$, and  $\boldsymbol{\theta}'$ as a uniformly random injection from $[k]$ to $V(G)$. Also, let $\mathbf{U}$ be a uniformly random subset of $V(G)$ of size $m$, and $\mathbf{U}'$ be a uniformly random subset of $V(G) \setminus \boldsymbol{\theta}'([k])$ of size $m-k$. Using these random variables, we define two events $E$ and $E'$ as follows:
    \begin{itemize}
      \item $E$ is the event that $\boldsymbol{\theta}$ is an embedding of $\tau$ into $|H^\tau|$, the $\tau$-labelled graph $(|H^\tau|, \boldsymbol{\theta})$ is $\tau$-isomorphic to $H^\tau$, and the induced subgraph $G[\mathbf{U}]$ is isomorphic to $|H^\tau|$.
      \item $E'$ is the event that $\boldsymbol{\theta}'$ is an embedding of $\tau$ into $G$, and the $\tau$-labelled graph $(G[\mathbf{U}' \cup \boldsymbol{\theta}'([k])],\boldsymbol{\theta}')$ is $\tau$-isomorphic to $H^\tau$.
    \end{itemize}

  The left-hand side of \Cref{eq:downward-adjoint:target-eq} is equal to the probability of $E$. The random variables $\boldsymbol{\theta}$ and $\mathbf{U}$ are independent, and $q_{H^\tau}$ and $p(|H^\tau|,G)$ are the probabilities that $\boldsymbol{\theta}$ and $\mathbf{U}$ satisfy their respective conditions in the event $E$. Similarly, the right-hand side of \Cref{eq:downward-adjoint:target-eq} is equal to the probability of $E'$. The factor $q_{(\tau ,\mathrm{id})} \cdot p(\tau,G)$ in the right-hand side is the probability of the event $E''$ that $\boldsymbol{\theta}'$ is an embedding of $\tau$ into $G$, and the remaining factor equals the conditional probability $\mathbb{P}[E' \mid E'']$.  
  
  Thus, it suffices to show that $\mathbb{P}[E] = \mathbb{P}[E']$. We note that the joint distribution of $(\boldsymbol{\theta}, \mathbf{U})$ is the uniform distribution over the set of all pairs $(\theta, U)$ where $\theta$ is an injection from $[k]$ to $V(H^\tau)$ and $U$ is a subset of $V(G)$ with size $m$. Similarly, the joint distribution of $(\boldsymbol{\theta}', \mathbf{U}')$ is the uniform distribution over the set of all pairs $(\theta', U')$ where $\theta'$ is an injection from $[k]$ to $V(G)$ and $U'$ is a subset of $V(G) \setminus \theta'([k])$ with size $m-k$. Elementary counting confirms that these two sample spaces have the same cardinality $n! / ((n-m)! (m-k)!)$. As a result, for such $(U,\theta)$ and $(U',\theta')$, we have
  \[
  \mathbb{P}\big[(\boldsymbol{\theta},\mathbf{U}) = (\theta, U)\big] = \mathbb{P}\big[(\boldsymbol{\theta}',\mathbf{U}') = (\theta', U')\big] = \frac{(n-m)! (m-k)!}{n!}.
  \]
  Therefore, to show that $\mathbb{P}[E] = \mathbb{P}[E']$, we just need to construct a bijection between the set of those $(\theta, U)$ satisfying $E$ and the set of those $(\theta', U')$ satisfying $E'$. For each 
  $U_0 \subseteq V(G)$ with size $m$ such that $G[U_0]$ is isomorphic to $|H^\tau|$, we fix one isomorphism $f_{U_0} : V(H^\tau) \to U_0$ from $|H^\tau|$ to $G[U_0]$. Using this, we construct the desired bijection.
  Given a pair $(\theta, U)$ satisfying $E$, we define the corresponding pair $(\theta', U')$ satisfying $E'$ as follows. By the definition of $E$, $G[U]$ is isomorphic to $|H^\tau|$. So, we have the isomorphism $f_U$ 
  from $|H^\tau|$ to $G[U]$. We define $\theta' : [k] \to V(G)$ by $\theta' \defeq f_U \circ \theta$, and also define $U' \defeq U \setminus \theta'([k])$. Then, $(\theta', U')$ satisfies the event $E'$. Furthermore, this construction is invertible: given a pair $(\theta', U')$ satisfying $E'$, we can recover the corresponding pair $(\theta, U)$ satisfying $E$ by letting $U \defeq U' \cup \theta'([k])$ and $\theta \defeq f_U^{-1} \circ \theta'$. This completes the proof of \Cref{eq:downward-adjoint:target-eq} and thus of the lemma.
\end{proof}

We illustrate the downward operator by using it to derive \Cref{eqn:mantel-key-inequality}.
Let $\tau$ be the unique type of size $1$, i.e., the graph with a single vertex and no edges.
A $\tau$-labelled graph $(G,\theta)$ can be equivalently viewed as a graph $G$ with one distinguished
vertex (namely $\theta(1)$); we adopt this viewpoint below.
By \Cref{cor:basic-algebra-tau:sos}, the inequality
\[
\left(\LItwo - \LEtwo\right)\cdot\left(\LItwo - \LEtwo\right) \geq 0^\tau
\]
is valid in the $\tau$-labelled variant of flag algebra.
Here, $\LItwo$ and $\LEtwo$ are the $\tau$-labelled edgeless graph and the $\tau$-labelled single-edge
graph on vertex set $[2]$, respectively; the distinguished vertex is depicted as an empty circle.

We expand the product and then eliminate all multiplications using valid equalities:
\begin{align*}
  & \left(\LItwo - \LEtwo\right)\cdot\left(\LItwo - \LEtwo\right)
  \\
  & \qquad {} = \LItwo \cdot \LItwo - 2 \cdot \left(\LItwo \cdot \LEtwo\right) + \LEtwo \cdot \LEtwo
  && \text{by the $\tau$-labelled version of \Cref{lemma:basic-algebra}}
  \\
  & \qquad {} = \left(\LIthree + \LEthreeC\right)
  - 2 \cdot \left(\tfrac{1}{2} \cdot \LEthreeB + \tfrac{1}{2} \cdot \LPthreeB\right)
  + \left(\LPthreeC + \LKthree\right)
  && \text{by \Cref{lemma:eliminate-multiplication-tau}}
  \\
  & \qquad {} = \LIthree + \LEthreeC - \LEthreeB - \LPthreeB + \LPthreeC + \LKthree
  && \text{by the $\tau$-labelled version of \Cref{lemma:basic-algebra}}.
\end{align*}
Here, $\LIthree$, $\LEthreeB$, $\LEthreeC$, $\LPthreeB$, $\LPthreeC$, and $\LKthree$ are the six
$\tau$-labelled graphs on vertex set $[3]$ with the obvious edge sets; as above, the distinguished
vertex is shown as an empty circle.

Therefore, by \Cref{lemma:adjoint-transfer},
\[
\alpha^\dagger_d\!\left(\LIthree + \LEthreeC - \LEthreeB - \LPthreeB + \LPthreeC + \LKthree\right) \geq 0
\]
is a valid inequality
in the original flag algebra. This inequality implies the one in  \Cref{eqn:mantel-key-inequality} because the left-hand side of the former is equal to that of the latter as shown by a series of valid equations below:
\begin{align*}
  & \alpha^\dagger_d\!\left(\LIthree + \LEthreeC - \LEthreeB - \LPthreeB + \LPthreeC + \LKthree\right)
  \\
  & \qquad {} =
  \begin{aligned}[t]
    \alpha_d\left(\LIthree\right) + \alpha_d\left(\LEthreeC\right)
    & - \alpha_d\left(\LEthreeB\right)
    - \alpha_d\left(\LPthreeB\right) 
    \\
    & {} + \alpha_d\left(\LPthreeC\right) + \alpha_d\left(\LKthree\right)
  \end{aligned}
  && \text{by the definition of } \alpha_d^\dagger
  \\
  & \qquad {} =
  \Ithree + \tfrac{1}{3} \cdot \Ethree - \tfrac{2}{3} \cdot \Ethree
  - \tfrac{2}{3} \cdot \Pthree + \tfrac{1}{3} \cdot \Pthree + \Kthree
  && \text{by the definition of } \alpha_d
  \\
  & \qquad {} = \Ithree - \tfrac{1}{3} \cdot \Ethree - \tfrac{1}{3} \cdot \Pthree + \Kthree
  && \text{by \Cref{lemma:basic-algebra}}.
\end{align*}

\section{Final Remarks}
\label{sec:conclusion}

We have presented flag algebra from the perspective of computer scientists working in logic, programming languages, automated verification, and formal methods. Our presentation emphasises that flag algebra forms a logical system for establishing asymptotic inequalities in extremal graph theory, with a well-defined syntax, semantics, and proof strategies. We described in detail one common proof strategy: proving inequalities in a labelled variant of flag algebra---often starting from manifestly valid sum-of-squares inequalities---and then transferring them to the unlabelled setting using the downward operator. We also highlighted that this transfer mechanism relies on the notion of an adjoint pair, reminiscent of Galois connections and categorical adjunctions. Just as these notions relate different interpretations of programming languages or logics, adjoint pairs relate different variants of flag algebra.

This perspective suggests several research directions for logic and formal-methods researchers. First, it would be interesting to develop a proof theory for flag algebra. The strength and limitations of the downward-operator-based strategy have been analysed in the literature. For example, when the strategy is restricted to start from the sum-of-squares inequalities in \Cref{cor:basic-algebra-tau:sos}, it is known to be incomplete~\cite{Hatami2011undecidability,Blekherman2020simple}. Similar analyses for other proof strategies in Razborov's original development~\cite{Razborov2007}---such as those based on the upward operator or graph differentiation---would be valuable. More broadly, designing proof systems for flag algebra with good proof-theoretic properties while remaining effective in practice is an appealing direction.

Second, it would be interesting to develop more effective automation and software tools for constructing proofs in flag algebra and for solving the resulting optimisation problems. As noted in the introduction, {\sc Flagmatic} has been highly successful for many extremal problems, but it largely relies on the downward-operator approach with sum-of-squares starting inequalities, and it does not scale well to medium-sized pattern graphs. Developing scalable automation that goes beyond the downward-operator/sum-of-squares paradigm is therefore an important challenge.

\section*{Acknowledgments}
We would like to thank Joonkyung Lee, Sang-il Oum, Hyunwoo Lee, Taeyoung Kim, and Jaewon Moon for helping us to understand Razborov's flag algebra. This work was supported by the National Research Foundation of Korea(NRF) grant funded by the Korean Government(MSIT) (No. RS-2023-00279680).

\bibliographystyle{plain} % Styles: plain, abbrv, alpha, unsrt
\bibliography{references}

\end{document}